\newcommand{\Z}{\mathbb Z}
\newcommand{\N}{\mathbb N}
\newcommand{\F}{\mathbb F}
\newcommand{\XOR}{\mathbin{\mathtt{XOR}}}
\newcommand{\OR}{\mathbin{\mathtt{OR}}}
\newcommand{\AND}{\mathbin{\mathtt{AND}}}
\newcommand{\EXP}{\mathbin{\uparrow}}
\newcommand{\DIV}{\mathbin{/\!/}}
\newcommand{\NOT}{\operatorname{\mathtt{NOT}}}
\newcommand{\ord}{\operatorname{ord}}
\newcommand{\mask}{\operatorname{\mathtt{MASK}}}
\renewcommand{\:}{\colon}
\renewcommand{\>}{\rightarrow}
\newtheorem{theorem}{Theorem}
\theoremstyle{plain}
\newtheorem{corollary}{Corollary}
\newtheorem{example}{Example}
\newtheorem{lemma}{Lemma}
\newtheorem{proposition}{Proposition}
\newtheorem{remark}{Remark}
\begin{document}
\title[T-functions revisited]{T-functions revisited: New criteria for bijectivity/transitivity}
\author{Vladimir Anashin}
\author{Andrei Khrennikov} 
\author{Ekaterina Yurova}
\thanks{The authors were supported in part by  grant of
 Linnaeus University Mathematical Modelling. The first of the authors was
 also supported by 
 Russian Foundation for Basic Research grant No 12-01-00680-a
 and by Chinese Academy of Sciences visiting professorship for senior international
scientists grant No  2009G2-11}

\keywords{T-function, bijectivity, transitivity, non-Archimedean ergodic
theory, van der Put series, ergodicity, measure-preservation}
\subjclass[2000]{94A60, 11S82, 11T71}

\begin{abstract}
The paper presents new criteria for bijectivity/transitivity of
T-functions and a fast knapsack-like algorithm of evaluation of a T-function.
Our approach is based on non-Archimedean ergodic theory: Both the criteria and algorithm use van der Put series to represent 1-Lipschitz $p$-adic functions
and to study measure-preservation/ergodicity of these.
\end{abstract}

\maketitle

\section{Introduction}

For years  linear feedback shift
registers (LFSRs) over a 2-element field $\mathbb F_2$ have been one of the most important building blocks in
keystream generators of stream ciphers. LFSRs can easily be
designed to produce binary sequences of the longest period (that is, of length $2^k-1$
for a $k$-cell LFSR over $\mathbb F_2$); LFSRs are fast and easy to
implement both in hardware and in software. However, sequences produced by LFSRs  have  linear dependencies that make easy to analyse  the sequences to construct
attacks on the whole cipher. To make output sequences of LFSRs more secure
these linear dependencies must be destroyed by a properly chosen
filter; this is the filter that carries the major cryptographical load making
the whole cipher secure. 

T-functions were found to be useful tools to design fast
cryptographic primitives and ciphers based on usage of both arithmetic (addition, multiplication) and logical operations, see \cite{ASCCipher,hong05new,Hong05tsc3,KlSh,klimov03cryptographic,klimov04new,klimov05app,klimov05thesis,klsh04tfi,kotomina99,Mir1,ABCv3,tsc4Cipher,vest06phase2Cipher,vestwebsite}.
Loosely speaking, a T-function is a map of $k$-bit words into $k$-bit words such
that each $i$-th bit of image depends only on low-order bits $0,...,i$ of the
pre-image. Various methods are known to construct bijective T-functions as
well as transitive T-functions
(the latter are the ones that produce sequences of the longest possible period, $2^k$),
see \cite{AnKhr,me:1,me:2,me:conf,me:ex,me-CJ,me-NATO,kotomina99,Lar,KlSh,klimov03cryptographic,klimov04new,klimov05thesis,hong05new}. Transitive T-functions have been considered as a candidate
to replace LFSRs in keystream generators of stream ciphers, see e.g. \cite{ABCv3,ASCCipher,Hong05tsc3,klsh04tfi,Mir1,tsc4Cipher} since sequences produced by T-function-based keystream generators 
are proved to have a number
of good cryptographic properties, e.g.,  high linear and 2-adic complexity,
uniform distribution of subwords, etc., see \cite{AnKhr,Koloko,me-NATO,LinCompT-fun}.
Bijective T-functions can be  used to design  filter functions
of stream ciphers, see \cite{AnKhr,me-NATO,ABCv3}. 

The main purpose of this paper is to provide new criteria for bijectivity/transitivity
of T-functions; so Theorems \ref{thm:vdp-mespres} and \ref{thm:ergnew}
are main results of the paper. In our opinion, these new criteria  might
be better applicable to T-functions
that are represented via compositions of standard computer instructions 
on representation of T-functions via additions of some non-negative constants.
What is important, the representation can be used to evaluate a T-function
via a knapsack-like algorithm:
Assuming the constants are stored
in memory,   \emph{to calculate a value of an arbitrary T-function on a $k$-bit word, one needs only not more than $k$ calls to memory and not more than $k-1$ additions modulo $2^k$ of $k$-bit numbers}. Thus, representations of that sort could be used in design of various high-performance cryptographic primitives:
keystream generators, filter functions, cipher combiners (Latin squares) and fast stream
ciphers as a whole. 

The representation is based on \emph{van der Put series}, special convergent series
from $p$-adic analysis; that is why
the $p$-adic analysis (and $p$-adic ergodic theory) are main  mathematical tools we use in the paper:
To determine bijectivity/transitivity of a given T-function, we represent
it via van der Put series \eqref{vdp} and apply accordingly Theorems
\ref{thm:vdp-mespres} and/or \ref{thm:ergnew}. We stress that given a T-function combined from basic computer instructions, `normally'
it is easier 
to represent it via van der Put series than via Mahler series (or via
coordinate functions $\psi_i$ to use criteria based on algebraic normal forms,
see Subsection \ref{ssec:erg}); moreover, once the T-function  is represented via van der Put series, it is much faster to evaluate it compared to representation
via Mahler series or via coordinate functions.

The van der Put series may also play an important role in a study
of linear dependencies among coordinate sequences (the ones produced by the
mentioned coordinate functions $\psi_i$) of a transitive T-function. Although  given
a `randomly chosen'  T-function, its coordinate sequences produced by $\psi_i$
and $\psi_j$ should be considered as `independent' once $i\ne j$ (meaning
the first half-periods of these sequences are independent Boolean vectors, cf. \cite[Theorem 11.26]{AnKhr}), this is not the case for large classes
of transitive T-functions: There are linear relations between any two
adjacent coordinate sequences in Klimov-Shamir T-functions \cite{LinearKlSh05,LinPropT-func},
in  polynomials with integer coefficients \cite{LinPropPol}, and in T-functions that are uniformly differentiable modulo 4, \cite{me:Linrel_tf-SETA2012,me:Linrel_tf-prep}.
The latter class is currently the largest known class where transitive T-functions
exhibit linear relations between adjacent coordinate sequences.
Therefore an important problem is to characterize transitive T-functions
that exhibit no linear dependencies between adjacent coordinate sequences
as the said linear relations may result in attacks against T-function-based
ciphers, \cite{me:Linrel_tf-prep}.
Van der Put series might be an adequate tool in a study of the problem as
with the use of the series one can handle `non-smooth' T-functions  (whence, the ones which are not
uniformly differentiable modulo 4). We note however that the mentioned study is a future work whose subject is outside
the scope of the current paper.

It is worth noticing here that the $p$-adic ergodic theory which is 
exploited in the  current paper constitutes an important  part
of non-Archimedean (and wider, of algebraic)
dynamics, a rapidly developing mathematical discipline that recently demonstrated
its effectiveness in application to various sciences: computer science,
cryptology,
physics, molecular biology, cognitive sciences, genetics, artificial
intelligence, image processing, numerical analysis and modelling, etc. Due
to a huge number of papers in the area, we can 
mention only some monographs  here to enable the interested reader to find
relevant references therein: \cite{AnKhr,Khrennikov9,Khrennikov:1997,Khren:mono}. 

The paper is organized as follows:
\begin{itemize}
\item In Section \ref{sec:T-non_A} we give a brief survey of non-Archimedean
theory of T-functions: history, state of the art, main notions and results. 
\item In Section \ref{sec:main} we prove the said criteria for bijectivity/transitivity
of T-functions in terms of van der Put series.
\item Using the transitivity criterion, in Section \ref{sec:App} we give two examples
of ergodic T-functions which are composed of additions and maskings. Also
we
explain how to use the bijectivity criterion
in order to construct huge classes of large Latin squares and
introduce a fast knapsack-like  algorithm of evaluation of a T-function
represented by van der Put series.
\item We conclude in Section \ref{sec:Concl}.
\end{itemize}
 
\section{Non-Archimedean theory of T-functions: brief survey}
\label{sec:T-non_A}
In this Section, we introduce basics of what can be called the non-Archimedean
approach to T-functions. For the full theory see monograph \cite{AnKhr} or
 expository paper \cite{me-NATO}.
 We start with a definition of a T-function and show that
T-functions can be treated as continuous
functions defined on and valued in the space of 2-adic integers. Therefore
we introduce basics of 2-adic arithmetic and of 2-adic Calculus that we will
need to state and prove our main results. There are many comprehensive monographs
on  $p$-adic numbers and $p$-adic analysis that contain all necessary definitions
and proofs, see e.g. \cite{Kobl,Mah,Sch}
or introductory chapters in \cite{AnKhr}; so further in the Section we introduce
2-adic numbers in a somewhat informal manner.


\subsection{T-functions}

An $n$-variate triangular function (a T-function for
short) is a mapping
\[
\left(\alpha_0^{\downarrow},\alpha_1^{\downarrow},
\alpha_2^{\downarrow},\ldots\right)\mapsto
\left(\Phi_0^{\downarrow}\left(\alpha_0^{\downarrow}\right),
\Phi_1^{\downarrow}\left(\alpha_0^{\downarrow},
\alpha_1^{\downarrow}\right),
\Phi_2^{\downarrow}\left(\alpha_0^{\downarrow},
\alpha_1^{\downarrow},\alpha_2^{\downarrow}\right),
\ldots\right),
\]
where $\alpha_i^{\downarrow}\in\F_2^n$ is a Boolean columnar
$n$-dimensional vector; $\F_2=\{0,1\}$ is a 2-element field, and
\[
\Phi_i^{\downarrow}\colon (\F_2^n)^{i+1}\to \F_2^m
\]
maps $(i+1)$ Boolean columnar $n$-dimensional vectors
$\alpha_0^{\downarrow},\ldots,\alpha_i^{\downarrow}$ to
$m$-dimensional columnar Boolean vector
$\Phi_i^{\downarrow}\left(\alpha_0^{\downarrow},
\ldots,\alpha_i^{\downarrow}\right)$. Accordingly, a
{univariate T-function $f$} is a mapping
\begin{equation}
\label{eq:T-uni}
(\chi_0,\chi_1,\chi_2, \ldots)\stackrel{f}{\mapsto}
(\psi_0(\chi_0);\psi_1(\chi_0,\chi_1);
\psi_2(\chi_0,\chi_1,\chi_2);\ldots),
\end{equation}
where $\chi_j\in\{0,1\}$, and each
$\psi_j(\chi_0,\ldots,\chi_j)$ is a Boolean function in
Boolean variables $\chi_0,\ldots,\chi_j$. T-functions may
be viewed as mappings from non-negative integers to
non-negative integers: e.g., a univariate T-function $f$
sends a number with the base-$2$ expansion
\[
\chi_0+\chi_1\cdot 2+\chi_2\cdot 2^2+\cdots
\]
to the number with the base-2 expansion
\[
\psi_0(\chi_0)+\psi_1(\chi_0,\chi_1)\cdot
2+\psi_2(\chi_0,\chi_1,\chi_2)\cdot 2^2+\cdots
\]
Further in the paper we refer to these Boolean functions
$\psi_0, \psi_1,\psi_2,\ldots$ as \emph{coordinate
functions} of a T-function $f$. If we restrict
T-functions to the set of all numbers whose base-$2$
expansions are not longer than $k$, we sometimes refer these
restrictions as \emph{T-functions on $k$-bit words}. Thus, we may consider
the restriction of a (univariate)
T-function $f$  on $k$-bit words as a transformation $f\bmod 2^k$ of the residue ring $\Z/2^k\Z$
modulo $2^k$ as every residue $r\in\Z/2^k\Z$ can be associated to a base-2
expansion of a non-negative
integer from $\{0,1,\ldots,2^k-1\}$. 

Important examples of T-functions are basic machine
instructions:
\begin{itemize}
\item integer arithmetic operations (addition,
multiplication,\ldots);
\item bitwise logical operations ($\OR$, $\XOR$, $\AND$,
$\NOT$);
\item some of their compositions (masking with mask $m$, $\mask(\cdot,m)=\cdot\AND m$; $\ell$-bit
shifts towards higher
order bits, $\cdot 2^\ell$; reduction modulo $2^k$, $\cdot\bmod 2^k=\cdot\AND(2^k-1)$).
\end{itemize}
Since obviously a composition of T-functions is a
T-function (for instance, \emph{any polynomial with
integer coefficients is a T-function}), the T-functions
are natural transformations of binary words that can be performed by digital computers. 
That is the main reason why T-functions  have attracted attention of
cryptographic community: T-functions may be used to construct new cryptographic
primitives
that are suitable for software implementations. For this purpose, given a
T-function $f$ and a bitlength  $k$, it is important to determine whether the mapping $f\bmod 2^k$ is bijective (that is, invertible) or transitive (that is, a permutation with a single cycle).

Although in cryptology the term ``T-function'' was
suggested only in 2002, by Klimov and Shamir,
see~\cite{KlSh}, in mathematics 
these mappings were known (however, under other names) long before 2002, and various effective criteria were proved to determine bijectivity and/or transitivity of T-functions. Some of these criteria were published in 1994
by the first author of the paper, see \cite{me:1}; the criteria are based on
2-adic ergodic theory and use representation of a T-function $f$ in the
form of \emph{Mahler series} \eqref{eq:Mah}.
 However, in some
cases to represent given T-function in this form may be  a difficult mathematical
task by itself; that is why a variety of criteria is needed to handle T-functions
represented as compositions of various computer instructions (we give a brief
survey of known criteria in Subsection \ref{ssec:erg}).
\subsection{Brief history of T-functions}
As said, in cryptology the term ``T-function'' was
suggested only in 2002, 
in mathematics the mappings we
refer to as T-functions have being studied more than half a century.

In automata theory, Yablonsky et. al. studied
the so-called \emph{determinate functions} since 1950-th,
see~\cite{Yb}. Actually, a determinate function is a
function that can be represented by an automaton: Consider
an automaton with a binary input and binary output; then the automaton
transforms
each infinite input string of $0$-s and $1$-s  into  infinite output string of $0$-s and $1$-s
(we suppose that the initial state of the automaton is fixed). Note that
every outputted $i$-th bit depends only on the inputted
$i$-th bit and on the current state of the automaton. However,
the current state depends only on the previous state and on
the $(i-1)$-th input bit. Hence, for every $i=1,2,\ldots$,
the $i$-th outputted bit depends only on bits $1,2,\ldots,i$
of the input string, and so the transformation of all
infinite binary strings performed by the automaton is a
T-function.  The $p$-adic approach in automata theory goes back to the work
of Lunts \cite{Lunts} of 1965.

We note here that Yablonsky and his succeeders were mostly
interested in such properties of determinate functions as
completeness of various systems of functions, various methods how to
construct an automaton that represents the given function, etc. It is
worth noticing also that determinate functions were studied in
a more general setting, for arbitrary $K$-letter
inputs/outputs, cf.~\cite{Yb} or recent works \cite{me-auto_fin,me:Discr_Syst} and references
therein. It is worth noticing here that in \cite{me-auto_fin} (as well as
in the current paper) van der Put series is the main tool.

T-functions were also studied in algebra, and also in much
more general setting: T-functions are a special case of the
so-called \emph{compatible mappings}; namely, any
T-function on $k$-bit words is a compatible mapping of a
residue ring modulo~$2^k$.

Recall that a transformation $f\colon A\to A$ of an
algebraic system $A$ (that is, of a set endowed with
operations $\omega\in\Omega$) is called \emph{compatible}
whenever $f$ agrees with all congruences of $A$; that is,
$f(a)\sim f(b)$ whenever $a\sim b$, where $\sim$ is a
congruence of $A$. Recall that a congruence $\sim$ is an
equivalence relation that agrees with all operations
$\omega$ of $A$:
$\omega(a_1,\ldots,a_r)\sim\omega(b_1,\ldots,b_r)$ whenever
$a_i\sim b_i$, $i=1,2,\ldots,r$, $r$ is the arity of
$\omega$. It is obvious that a compatible transformation of
the residue ring $\Z/2^k\Z$ modulo $2^k$ is a T-function
on $k$-bit words (under a natural one-to-one correspondence
between $\F_2^k$ and $\Z/2^k\Z$, when an $k$-bit word from
$\F_2^k$ is considered as a base-$2$ expansion of the corresponding integer):
In the case when $A=\Z/2^k\Z$ is a residue ring modulo $2^k$ the
compatibility of $f$ yields
$$
a\equiv b\pmod {2^s}\implies f(a)\equiv f(b) \pmod{2^s},
$$
for all $s\le k$, which is merely an equivalent definition
of a (univariate) T-function on $k$-bit words. For further
results on compatible functions on rings and other algebras
see  monograph~\cite{LN} and references
therein. 

Since the early 1990-th the non-Archimedean theory of
T-functions emerged, which treated T-functions as
continuous transformations of the space $\Z_2$ of $2$-adic
integers and studied corresponding dynamics. The first
publications in this area were~\cite{me:1}
and~\cite{me:conf}; the importance of T-functions
for pseudorandom generation and cryptology was explicitly
stated in these papers as well. Within that theory, it was
demonstrated that T-functions are continuous
transformations with respect to  $2$-adic metric, and
that bijectivity (resp., transitivity) of
T-functions correspond to measure preservation (resp.,
ergodicity) of these continuous transformations. This
approach supplies a researcher with a number of effective
tools from the non-Archimedean (actually, $2$-adic) analysis
to determine whether a given T-function is bijective or
transitive, to study distribution and
structure of output sequences, to construct wide classes of
T-functions with prescribed properties, etc. We use
this approach in the present paper.

We note that the theory was developed in a more
general setting, for arbitrary prime $p$, and not
necessarily for $p=2$, which corresponds to the case of
T-functions. In the paper, we are mostly interested in the case
$p=2$; however, we prove corresponding results for arbitrary prime $p$ where
possible.   

Last, but not least: T-functions under the name of
triangular Boolean mappings were studied in the theory of
Boolean functions. Within this theory there were obtained
important criteria for invertibility/transitivity
of a T-function, in terms of
coordinate Boolean functions $\psi_0,\psi_1,\psi_2,\ldots$.
The criteria belongs to mathematical folklore circulating at
least since 1970-th among mathematicians dealing with the
theory of Boolean functions; unfortunately the author of these important
criteria is not known.  To the best of our knowledge, the first
quotation of these folklore criteria in literature occurred in 1994, see
\cite[Lemma 4.8]{me:1}: As it is clearly marked in the mentioned paper, the
said Lemma is just a re-statement of these well-known criteria. We reproduce these folklore criteria
below, see Theorem \ref{thm:ergBool}.

\subsection{T-functions, $2$-adic integers, and $p$-adic analysis}
\label{ssec:NA}

As it follows directly from the definition, any T-function
is well-defined on the set $\Z_2$ of all infinite binary
sequences $\ldots\delta_2(x)\delta_1(x)\delta_0(x)=x$, where
$\delta_j(x)\in\F_2=\{0,1\}$, $j=0,1,2,\ldots$. Arithmetic
operations (addition and multiplication) with these
sequences can be defined via standard ``school-textbook''
algorithms of addition and multiplication of natural numbers
represented by base-$2$ expansions. Each term of a sequence
that corresponds to the sum (respectively, to the product)
of two given sequences can be calculated by these
algorithms within a finite number of steps.

Thus, $\Z_2$ is a commutative ring with respect to the
addition and multiplication defined in this manner. The ring $\Z_2$ is
called the ring of \emph{$2$-adic integers}. The ring $\Z_2$
contains a subring $\Z$ of all rational integers: For
instance, $\ldots111=-1$.

Moreover, the ring $\Z_2$ contains all rational numbers that
can be represented by irreducible fractions with odd
denominators. For instance, 
$\ldots01010101\times \ldots00011=\ldots111$, i.e.,
 $\ldots01010101=-1/3$ since $\ldots00011=3$ and
$\ldots111=-1$.

Sequences with only a finite number of $1$-s correspond to
non-negative rational integers in their base-$2$ expansions,
sequences with only a finite number of $0$-s correspond to
negative rational integers, while eventually periodic
sequences (that is, sequences that become periodic 
from a certain place) correspond to rational numbers
represented by irreducible fractions with odd denominators:
For instance, $3=\ldots00011$, $-3=\ldots11101$,
$1/3=\ldots10101011$, $-1/3=\ldots1010101$. So the $j$-th
term $\delta_j(u)$ of the corresponding sequence $u\in\Z_2$
is merely the $j$-th digit of the base-$2$ expansion of $u$
whenever $u$ is a non-negative rational integer,
$u\in\N_0=\{0,1,2,\ldots\}$.

What is important, the ring $\Z_2$ is a metric space with
respect to the metrics (distance) $d_2(u,v)$ defined by the
following rule: $d_2(u,v)=|u-v|_2=1/2^n$, where $n$ is the
smallest non-negative rational integer such that
$\delta_n(u)\ne\delta_n(v)$, and $d_2(u,v)=0$ if no such $n$
exists (i.e., if $u=v$). For instance $d_2(3,1/3)=1/8$. The
function $d_2(u,0)=|u|_2$ is a 2-adic absolute value of the $2$-adic integer
$u$, and $\ord_2 u=-\log_2|u|_2$ is a $2$-adic valuation
of $u$. Note that for $u\in\Z$ the valuation $\ord_2 u$ is
merely the exponent of the highest power of $2$ that divides
$u$ (thus, loosely speaking, $\ord_2 0=\infty$; so
$|0|_2=0$). This means, in particular, that
\begin{equation}
\label{eq:cong-dist}
|a-b|_2\le2^{-k}\quad \text{if and only if}\quad a\equiv b\pmod{2^k}
\end{equation}
for $a,b\in\Z$. Using this equivalence, one can expand the map $\bmod{2^k}$
(the reduction
modulo $2^k$) to  the whole space $\Z_2$, obtaining a T-function  
$\cdot\bmod{2^k}=\cdot\AND{(2^k-1})$ that is defined everywhere on $\Z_2$;
so further we use both (equivalent) notations $|a-b|_2\le2^{-k}$ and  $a\equiv b\pmod{2^k}$
for arbitrary $a,b\in\Z_2$.

It is easy to see that the metric $d_2$ satisfies the \emph{strong triangle
inequality}:
\begin{equation}
\label{eq:s-tri}
|a+b|_2\le\max\{|a|_2,|b|_2\},
\end{equation}
for all $a,b\in\Z_2$. Metric spaces  of this kind are called \emph{ultrametric}
spaces,
or \emph{non-Archimedean} metrics spaces; the latter due to the fact that in ultrametric
spaces the Archimedean Axiom does not hold.

Once the metric is defined, one defines notions of
convergent sequences, limits, continuous functions on the
metric space, and derivatives if the space is a commutative
ring. For instance, with respect to the so defined metric
$d_2$ on $\Z_2$ the  sequence $1,3,7,\ldots,2^i-1,\ldots$ (or, in base-2
expansion, the sequence $1,11,111, \dots$) tends to $-1=\ldots
111$.
Bitwise logical operators (such as $\XOR$, $\AND$,~\ldots)
define continuous functions in two variables. 

Reduction modulo $2^n$ of a $2$-adic integer $v$, i.e.,
setting all terms of the corresponding sequence with indices
greater than $n-1$ to zero (that is, taking the first $n$
digits in the representation of $v$) is just an
approximation of a $2$-adic integer $v$ by a rational
integer with precision $1/2^n$: This approximation is an
$n$-digit positive rational integer $v \AND (2^n-1)$; the
latter will be denoted also as $v\bmod{2^n}$.

Actually \emph{a processor operates with approximations of
$2$-adic integers with respect to $2$-adic metrics}: When the
overflow happens, i.e., when a number that must be written
into an $n$-bit register consists of more than $n$
significant bits, the processor just writes only $n$ low
order bits of the number into the register thus reducing the
number modulo $2^n$. Thus, the accuracy of the approximation is
defined by the bitlength of machine words  of the processor.

What is the most important within the scope of the paper is that
all T-functions are \emph{continuous} functions of
$2$-adic variables, since \emph{all T-functions satisfy
{Lipschitz condition with a constant $1$} with respect to
the $2$-adic metrics}, and vice versa.

Indeed, it is obvious that the function $f\colon
\Z_2\to\Z_2$ satisfies the condition
$|f(u)-f(v)|_2\le|u-v|_2$ for all $u,v\in\Z_2$ if and
only if $f$ is compatible, since the inequality
$|a-b|_2\le1/2^k$ is just equivalent to the congruence
$a\equiv b \pmod{2^k}$. A similar property holds for
multivariate T-functions.
So we conclude:
\begin{center}\em
T-functions${}={}$compatible functions${}=1$-Lipschitz
functions
\end{center}

The observation we just have made implies that the
non-Archimedean (namely, the $2$-adic) analysis can be used
in the study of T-functions. For instance, one can prove
that the following functions satisfy Lipschitz condition
with a constant $1$ and thus are T-functions (so we can
use them in compositions to construct PRNGs):
\begin{itemize}
\item subtraction, $-$: $(u,v)\mapsto u-v$; 
\item exponentiation, $\EXP$: $(u,v)\mapsto u\EXP
v=(1+2u)^v$, and in particular
raising to negative powers, $u\EXP(-v)=(1+2u)^{-v}$;
\item division, $\DIV$: $u\DIV v=u\cdot
(v\EXP(-1))=u/(1+2v)$.
\end{itemize}

We summarize:
\begin{itemize}
\item T-functions on $n$-bit words are approximations
of $2$-adic compatible functions (i.e.,
$1$-Lipschitz functions) with
precision $2^{-n}$ w.r.t. the $2$-adic metric: That is, a
T-function on $n$-bit words is just a reduction modulo
$2^n$ of a $2$-adic $1$-Lipschitz function.
\item To study properties of T-functions one can use
$2$-adic analysis , since compatible functions are
continuous w.r.t. the $2$-adic metric.
\item In addition to the basic machine instructions, to
construct T-functions one can use also subtraction,
division by an odd integer, raising an odd integer to a certain power.
\end{itemize}

All these considerations after proper modifications remain
true for arbitrary prime $p$, and \emph{not} only for $p=2$,
thus resulting the notion of the $p$-adic integer and in respective
$p$-adic analysis. For formal introduction to $p$-adic
analysis, exact notions and results see any relevant book,
e.g.~\cite{Kobl,Mah}. 
\subsection{The $2$-adic ergodic theory and bijectivity/transitivity of T-functions}
\label{ssec:erg}

Now we describe the connections between bijectivity/transitivity of T-functions and the $2$-adic
ergodic theory. We first recall some basic notions of dynamics and of ergodic
theory (which is a part of dynamics). 

A \emph{dynamical system on a
measure space} $\mathbb S$ is a triple $(\mathbb S;\mu;
f)$, where $\mathbb S$ is a set endowed with a measure
$\mu$, and $f\colon \mathbb S\to \mathbb S$ is a
\emph{measurable function}; that is, an $f$-preimage of any
measurable subset is a measurable subset. These basic
definitions from dynamical system theory, as well as the
following ones, can be found in~\cite{KN}; see
also~\cite{Kat_Has} as a comprehensive monograph on
various aspects of dynamical systems theory.

A \emph{trajectory} (or, \emph{orbit}) 
is a sequence
\[
x_0, x_1=f(x_0),\ldots, x_i=f(x_{i-1})=f^i(x_0),\ldots
\]
of points of the space $\mathbb S$, $x_0$ is called an
\emph{initial} point of the trajectory. If $F\colon \mathbb
S\to \mathbb T$ is a measurable mapping to some other
measurable space $\mathbb T$ with a measure $\nu$ (that is,
if an $F$-preimage of any $\nu$-measurable subset of
$\mathbb T$ is a $\mu$-measurable subset of $\mathbb S$),
the sequence $F(x_0), F(x_1), F(x_2),\ldots$ is called an
\emph{observable}. Note that the trajectory formally looks
like the sequence of states of a pseudorandom generator
while the observable resembles the output sequence.

A mapping $F\colon\mathbb S\to\mathbb Y$ of a measure
space $\mathbb S$ into a measure space $\mathbb Y$
endowed with probability measures $\mu$ and $\nu$,
respectively, is said to be {\emph{measure preserving}} (or,
sometimes, \emph{equiprobable}) whenever
$\mu(F^{-1}(S))=\nu(S)$ for each measurable subset
$S\subset\mathbb Y$. In the case $\mathbb S=\mathbb Y$ and
$\mu=\nu$, a measure preserving mapping $F$ is said to be
{\emph{ergodic}} whenever for each measurable subset $S$
such that $F^{-1}(S)=S$ one has either $\mu(S)=1$ or
$\mu(S)=0$.

Recall that to define a measure $\mu$ on some set $\mathbb
S$ we should assign non-negative real numbers to some
subsets that are called elementary. All other
\emph{measurable} subsets are compositions of these
elementary subsets with respect to countable unions,
intersections, and complements.

Elementary measurable subsets in $\Z_2$ are balls
$\mathbf B_{2^{-k}}(a)=a+2^k\Z_2$ of radii $2^{-k}$ centered at $a\in\Z_2$
(that is,
co-sets with respect to the ideal $2^k\Z_2$ of the ring $\Z_2$, generated by $2^k$). 
To each
ball we assign a number $\mu_2(\mathbf B_{2^{-k}}(a))=1/2^k$. This
way we define the probability measure $\mu_2$ on the space $\Z_2$,
$\mu_2(\Z_2)=1$. The measure $\mu_2$ is  a
(normalized) \emph{Haar measure} on $\Z_2$. The normalized
Haar measure on $\Z_2^n$ can be defined in a similar manner.

To put it in other words, a ball $a+2^k\Z_2$ (of radius $2^{-k}$) is
just a set of all $2$-adic integers that are congruent to $a$
modulo $2^k$; that is, the set of all infinite binary words that have 
common initial prefix of length $k$ which coincides with the one of the (infinite)
binary word $a$. The measure of this set is
$\mu_2(a+2^k\Z_2)=2^{-k}$. For example, $\cdots
*****0101=5+16\cdot\Z_2=-1/3+16\cdot\Z_2$ is a ball of
radius (and of measure) $1/16$ centered at the point $5$
(or, which is the same, at the point $-1/3$); all $2$-adic numbers
that are congruent to $5$ modulo~$16$ comprise this ball.

Note that the sequence $(s_i)_{i=0}^\infty$ of $2$-adic
integers is uniformly distributed (with respect to the
 measure $\mu_2$ on $\Z_2$) if and only if it
is uniformly distributed modulo $2^k$ for all
$k=1,2,\ldots$; That is, for every $a\in\Z/2^k\Z$ relative
numbers of occurrences of $a$ within initial segment of
length $\ell$ of the sequence $(s_i\bmod 2^k)_{i=0}^\infty$ of residues
modulo $2^k$ are asymptotically equal, i.e.,
$\lim_{\ell\to\infty}A(a,\ell)/\ell=1/2^k$, where
$A(a,\ell)=\#\{s_i\equiv a\pmod{2^k}\colon i<\ell\}$,
see~\cite{KN} for details. Thus, strictly uniformly
distributed sequences are uniformly distributed in the
usual meaning of the theory of distributions of sequences.
Of course, considerations of the above sort take place
for arbitrary prime $p$, and not only in the case when $p=2$.

The following Theorem (which was announced
in~\cite{me:2} and proved
in~\cite{me-spher}) holds: 

\begin{theorem}
\label{thm:erg-tran}
For $m=n=1$, a $1$-Lipschitz mapping
$F\colon\Z_p^n\to\Z_p^m$ {preserves the normalized Haar
measure} $\mu_p$ on $\Z_p$ (resp., is ergodic with respect
to $\mu_p$) if and only if it is bijective (resp.,
transitive) modulo $p^k$ for all $k=1,2,3,\ldots$.

For $n\ge m$, the mapping $F$ preserves the measure $\mu_p$ if
and only if it induces a balanced mapping of $(\Z/p^k\Z)^n$
onto $(\Z/p^k\Z)^m$, for all $k=1,2,3,\ldots$.
\end{theorem}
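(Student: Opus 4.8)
The plan is to exploit the fact, established in the excerpt, that a $1$-Lipschitz (equivalently, compatible) map $F$ is constant modulo $p^k$ on every ball of radius $p^{-k}$, so that it induces a well-defined reduction $F_k\colon(\Z/p^k\Z)^n\to(\Z/p^k\Z)^m$ for each $k$. The structural observation I would prove first is that the $F$-preimage of any elementary ball is again a disjoint union of balls of the same radius: concretely, $F^{-1}(b+(p^k\Z_p)^m)=\bigsqcup_{a\in F_k^{-1}(b)}(a+(p^k\Z_p)^n)$, the union running over those residues $a\in(\Z/p^k\Z)^n$ with $F_k(a)=b$. This holds precisely because membership of $x$ in the preimage depends only on $x\bmod p^k$. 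Since the balls form a generating algebra for the Haar $\sigma$-algebra and $F$ is continuous, it suffices by the standard measure-extension argument to verify measure-preservation on balls alone.

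Granting this, the measure-preservation statements reduce to a counting computation. For a ball $B=b+(p^k\Z_p)^m$ of measure $p^{-mk}$ one gets $\mu_p(F^{-1}(B))=\#F_k^{-1}(b)\cdot p^{-nk}$, since each preimage ball has measure $p^{-nk}$. Hence $F$ preserves measure iff $\#F_k^{-1}(b)=p^{(n-m)k}$ for every $b$ and every $k$, which is exactly the assertion that $F_k$ is balanced; in the special case $n=m=1$ this says every fiber is a singleton, i.e. $F_k$ is a bijection. This settles the $n\ge m$ claim and the measure-preservation half of the $n=m=1$ claim.

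For ergodicity (the case $n=m=1$) I would argue both implications via invariant sets. For the ``only if'' direction, suppose some $F_k$ is a bijection that is not a single cycle; pick a proper cyclic orbit $C\subsetneq\Z/p^k\Z$ of $F_k$ and lift it to $S=\bigcup_{a\in C}(a+p^k\Z_p)$. Because $F_k$ permutes $C$ onto itself, the preimage formula gives $F^{-1}(S)=S$ while $0<\mu_p(S)=|C|p^{-k}<1$, contradicting ergodicity. Thus ergodicity forces every $F_k$ to be transitive. The converse is the crux. Assume every $F_k$ is a single $p^k$-cycle and let $S$ be measurable with $F^{-1}(S)=S$. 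Fix $\varepsilon>0$, approximate $S$ in $L^1$ by a union of radius-$p^{-k}$ balls $U=\bigcup_{a\in A}(a+p^k\Z_p)$ with $\mu_p(S\triangle U)<\varepsilon$, and note that invariance together with measure-preservation yields $\|\mathbf 1_S-\mathbf 1_{F^{-j}(U)}\|_{L^1}=\|\mathbf 1_S-\mathbf 1_U\|_{L^1}<\varepsilon$ for every $j$. The single-cycle hypothesis is what makes the Ces\`aro average collapse: for each residue $a$ the sum $\sum_{j=0}^{p^k-1}\mathbf 1_A(F_k^{j}(a))$ counts the visits of the full orbit to $A$ and therefore equals $|A|$ independently of $a$, so $p^{-k}\sum_{j=0}^{p^k-1}\mathbf 1_{F^{-j}(U)}\equiv\mu_p(U)$ is constant. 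Averaging the previous estimate over $j$ then shows $\|\mathbf 1_S-\mu_p(U)\|_{L^1}<\varepsilon$, whence $\mathbf 1_S$ is $L^1$-close to the constant $\mu_p(S)$; since $\|\mathbf 1_S-\mu_p(S)\|_{L^1}=2\mu_p(S)(1-\mu_p(S))$ and $\varepsilon$ is arbitrary, $\mu_p(S)\in\{0,1\}$ and $F$ is ergodic.

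I expect the main obstacle to be this last averaging step: one must check that the $L^1$-approximation of an arbitrary invariant measurable set by finite unions of balls is legitimate (again the measure-extension/regularity fact), and that the orbit-counting identity---$F_k^{j}(a)$ sweeping out all of $\Z/p^k\Z$ exactly once as $j$ runs through $0,\dots,p^k-1$---is precisely the place where transitivity of $F_k$, rather than mere bijectivity, is used. The remaining parts are essentially bookkeeping built on the single preimage-of-a-ball lemma.
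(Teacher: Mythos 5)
The paper does not actually prove Theorem \ref{thm:erg-tran}; it only quotes it, with the proof delegated to the references (the theorem is ``announced in \cite{me:2} and proved in \cite{me-spher}''). So there is no in-paper argument to compare against, and your proposal has to be judged on its own. It is correct and self-contained. The single preimage-of-a-ball lemma $F^{-1}(b+(p^k\Z_p)^m)=\bigsqcup_{a\in F_k^{-1}(b)}(a+(p^k\Z_p)^n)$ is exactly the right structural fact, and it does follow from compatibility since membership of $x$ in the preimage depends only on $x\bmod p^k$. The counting consequence $\mu_p(F^{-1}(B))=\#F_k^{-1}(b)\,p^{-nk}$ then gives both the balancedness criterion for $n\ge m$ and bijectivity for $n=m=1$, and the extension from balls to all Borel sets is legitimate because the balls form a $\pi$-system generating the Borel $\sigma$-algebra and $S\mapsto\mu_p(F^{-1}(S))$ is a probability measure ($F$ is continuous, hence measurable). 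The ergodicity argument is also sound: the ``only if'' direction via lifting a proper cycle of $F_k$ to an invariant set of intermediate measure is standard, and your Ces\`aro-averaging proof of the converse works because a single $p^k$-cycle makes $\frac{1}{p^k}\sum_{j=0}^{p^k-1}\mathbf 1_{F^{-j}(U)}$ identically equal to $\mu_p(U)$, which is precisely where transitivity (not mere bijectivity) enters. Two small points you should make explicit in a written-up version: first, any finite union of balls of assorted radii can be refined to a finite union of balls of one common radius $p^{-k}$ (each larger ball splits into disjoint smaller ones), so the approximation step is consistent with the averaging step; second, the $L^1$-density of finite unions of balls in the measure algebra is the regularity/Carath\'eodory-extension fact you already flagged. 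With those spelled out, the argument is complete; it is arguably more elementary than the dynamical-systems machinery (unique ergodicity, minimality) often invoked for this equivalence in the literature the paper cites.
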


In other words, Theorem \ref{thm:erg-tran} yields that
\begin{itemize}\em
\item for a univariate T-function $f$, {measure preservation is
equivalent to bijectivity} of  $f\bmod 2^k$ for all
$k\in\N$;
\item for a multivariate T-function 
$F\colon\Z_2^n\to\Z_2^m$, $m\le n$,  measure preservation is
equivalent to a balance of $F\bmod 2^k$ for all
$k\in\N$;
\item ergodicity of $F\colon\Z_2^n\to\Z_2^n$ is equivalent to transitivity of $F\bmod 2^k$ for all $k\in\N$.
\end{itemize}
This theorem implies in particular that whenever one chooses
an ergodic T-function $f\colon\Z_2\to\Z_2$ as a state
transition function of an automaton 
and a measure-preserving
T-function $F\colon(\Z/2^k\Z)^n\to (\Z/2^k\Z)^m$ as an
output function, both the sequence of states and
output sequence of the automaton are uniformly distributed with
respect to the Haar measure. This implies that reduction of
these sequences modulo $2^n$ results in strictly uniformly
distributed sequences of binary words. Note also that any number that is
longer than a word bitlength of a computer, is
reduced modulo $2^n$ automatically.

Thus, Theorem~\ref{thm:erg-tran} points out a way to
construct generators of uniformly distributed sequences from standard computer instructions. To construct such a generator, one
must answer the following questions: {What compositions of
basic machine instructions} are measure-preserving? are
ergodic? Given a composition of basic machine instructions,
is it measure-preserving? is it ergodic?
These questions can be answered with the use of  $p$-adic ergodic theory,
see papers \cite{me:1,me:2,me:conf,me:ex,me-CJ,me-NATO,me-spher,me:3}; 
for complete theory and its applications to
numerous sciences see monograph \cite{AnKhr}.

Now we recall two results from the 2-adic ergodic theory. The
first one is a 40-year old folklore criteria of measure-preservation/ergodicity in terms of coordinate functions of a T-function.
\subsubsection{Criteria based on algebraic normal forms}
\label{ssec:ANF}
Recall that the \emph{algebraic normal form} (the ANF for
short) of the Boolean function
$\psi_j(\chi_0,\ldots,\chi_j)$ is the representation of this
function via $\oplus$ (addition modulo~$2$, that is, 
``exclusive or'') and $\cdot$ (multiplication modulo~$2$,
that is,  ``and'', or conjunction). In other words,
the ANF of the Boolean function $\psi$ is its representation
in the form
\[
\psi(\chi_0,\ldots,\chi_j)=
\beta\oplus\beta_0\chi_0\oplus\beta_1\chi_1\oplus\ldots
\oplus\beta_{0,1}\chi_0\chi_1\oplus\ldots,
\]
where $\beta,\beta_0,\ldots\in\{0,1\}$. Recall also that the
\emph{weight} of the Boolean function $\psi_j$ in $(j+1)$
variables is the number of $(j+1)$-bit words that
\emph{satisfy} $\psi_j$; that is, the weight is the cardinality
of the truth set of the Boolean function $\psi_j$.

\begin{theorem}[Folklore]
\label{thm:ergBool}
Let a univariate T-function $f$ be represented in the form \eqref{eq:T-uni}. The T-function $f$ is measure-preserving iff
for each $j=0,1,\ldots$ the Boolean function $\psi_j$ in
Boolean variables $\chi_0,\ldots,\chi_j$ is linear with
respect to the variable $\chi_j$; that is, $f$ is measure-preserving
iff the ANF of each $\psi_j$ is of the form
\[
\psi_j(\chi_0,\ldots,\chi_j)
=\chi_j\oplus\phi_j(\chi_0,\ldots,\chi_{j-1}),
\]
where $\phi_j$ is a Boolean function that does not depend on
the variable $\chi_j$.

The univariate T-function $f$ is ergodic  iff, additionally, all Boolean
functions $\phi_j$ are of {odd weight}. {The latter takes
place} iff $\phi_0=1$, and the full degree of the
Boolean function $\phi_j$ for $j\ge 1$ is exactly $j$, that
is, the ANF of $\phi_j$ {contains a monomial
$\chi_0\cdots\chi_{j-1}$}. Thus, $f$ is ergodic iff $\psi_0(\chi_0)=\chi_0\oplus 1$, and
for $j\ge 1$ the ANF of each $\psi_j$ is of the form
\[
\psi_j(\chi_0,\ldots,\chi_j)
=\chi_j\oplus\chi_0\cdots\chi_{j-1}\oplus
\theta_j(\chi_0,\ldots,\chi_{j-1}),
\]
where the weight of $\theta_j$ is even; i.e.,
$\deg\theta_j\le j-1$.
\end{theorem}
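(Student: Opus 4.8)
The plan is to invoke Theorem \ref{thm:erg-tran} to convert the analytic statements into combinatorial ones: measure-preservation of $f$ is equivalent to bijectivity of $f\bmod 2^k$ for every $k$, and ergodicity to transitivity (single $2^k$-cycle) of $f\bmod 2^k$ for every $k$. Both equivalences are then proved by induction on $k$, exploiting the triangular structure of \eqref{eq:T-uni}: on a fiber $\{a,a+2^k\}$ of the projection $\Z/2^{k+1}\Z\to\Z/2^k\Z$ the input bits $\chi_0,\ldots,\chi_{k-1}$ are fixed, so the output bits $0,\ldots,k-1$ coincide, and only output bit $k$, namely $\psi_k(\chi_0,\ldots,\chi_k)$, can change when $\chi_k$ is flipped.

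For measure-preservation I would show that $f\bmod 2^{k+1}$ is bijective iff $f\bmod 2^k$ is bijective and, for each fixed prefix $(\chi_0,\ldots,\chi_{k-1})$, the two values $\psi_k(\ldots,0)$ and $\psi_k(\ldots,1)$ differ; indeed, since the low $k$ output bits agree on a fiber, the two images are distinct modulo $2^{k+1}$ exactly when bit $k$ separates them. Over $\F_2$ this separation for every prefix is precisely the statement that $\psi_k$ is affine in $\chi_k$ with leading coefficient $1$, i.e. $\psi_k=\chi_k\oplus\phi_k(\chi_0,\ldots,\chi_{k-1})$. The base case $k=0$ merely says $\psi_0$ permutes $\F_2$, giving $\psi_0=\chi_0\oplus\phi_0$ with $\phi_0$ constant. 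Induction then yields the first assertion.

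For ergodicity, assume measure-preservation so that $\psi_j=\chi_j\oplus\phi_j$, and use the single-cycle lifting criterion: if $f\bmod 2^k$ is a single $2^k$-cycle, then $f\bmod 2^{k+1}$ is a single $2^{k+1}$-cycle iff $f^{2^k}(0)\equiv 2^k\pmod{2^{k+1}}$, since the cycle of $0$ projects onto the full base cycle and hence has length $2^k$ or $2^{k+1}$, the latter occurring exactly when bit $k$ of $f^{2^k}(0)$ is $1$. To evaluate that bit, write $x_i=f^i(0)$ with bits $\chi^{(i)}_j$; measure-preservation gives $\chi^{(i+1)}_k=\chi^{(i)}_k\oplus\phi_k(\chi^{(i)}_0,\ldots,\chi^{(i)}_{k-1})$, and telescoping the XOR over $i=0,\ldots,2^k-1$ (with $\chi^{(0)}_k=0$) yields $\chi^{(2^k)}_k=\bigoplus_i\phi_k(\chi^{(i)}_0,\ldots,\chi^{(i)}_{k-1})$. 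Because $f\bmod 2^k$ is a full cycle through $0$, the prefixes $(\chi^{(i)}_0,\ldots,\chi^{(i)}_{k-1})$ run over all of $\F_2^k$ exactly once, so this XOR equals the parity of the weight of $\phi_k$. Thus the lift is transitive iff $\phi_k$ has odd weight; the base case $\phi_0=1$ settles $k=0$, and induction gives the odd-weight characterization.

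Finally I would translate the weight condition into the degree form via the ANF (Möbius) identity: for $\phi_j$ in $\chi_0,\ldots,\chi_{j-1}$ the coefficient of the top monomial $\chi_0\cdots\chi_{j-1}$ equals $\bigoplus_x\phi_j(x)$, i.e. the weight parity. Hence odd weight is equivalent to $\deg\phi_j=j$, that is, to the ANF containing $\chi_0\cdots\chi_{j-1}$; putting $\theta_j=\phi_j\oplus\chi_0\cdots\chi_{j-1}$ then gives $\deg\theta_j\le j-1$ (equivalently, $\theta_j$ of even weight), which is the stated form. I expect the \emph{ergodicity} step to be the main obstacle, specifically making the single-cycle lifting lemma and the telescoping-to-weight-parity computation fully rigorous; the measure-preservation induction and the ANF degree translation are comparatively routine.
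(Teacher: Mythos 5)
Your proof is correct. Note that the paper itself does not prove Theorem \ref{thm:ergBool}: it is stated as folklore and the proof is delegated to the cited references (\cite[Theorem 4.39]{AnKhr}, \cite[Lemma 4.8]{me:1}, \cite[Theorem 1]{LinCompT-fun}). Your argument --- reduce to bijectivity/transitivity modulo $2^k$ via Theorem \ref{thm:erg-tran}, analyse the fibers $\{a,a+2^k\}$ for the bijectivity lift, use the cycle-lifting criterion $f^{2^k}(0)\equiv 2^k\pmod{2^{k+1}}$ together with the telescoping XOR over a full cycle for the transitivity lift, and the M\"obius identity for the top ANF coefficient to pass from odd weight to degree --- is exactly the standard proof given in those sources, and all the steps you flagged as potential obstacles (the lifting lemma and the weight-parity computation) are carried out correctly.
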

For the proof of the folklore Theorem see \cite[Theorem 4.39]{AnKhr}, or\cite[Lemma
4.8]{me:1}, or \cite[Theorem 1]{LinCompT-fun}.

\begin{remark}
Actually the bit-slice technique of Klimov and Shamir introduced in~\cite{KlSh} is just a
re-statement of Theorem~\ref{thm:ergBool}.
\end{remark}

We note that areas of applications of
Theorem~\ref{thm:ergBool} are restricted: Given
a T-function in a form of composition of basic computer instructions,
most often it is infeasible
to find its coordinate functions $\psi_i$. Thus, to
determine with the use of that theorem whether a given
composition of arithmetic and bitwise logical operators is
bijective or transitive is
possible only for relatively simple compositions like the
mapping $x\mapsto x+x^2\OR C$ considered in~\cite{KlSh}. The
latter mapping is \emph{transitive modulo $2^n$ if and only
if $C\equiv 5 \pmod 8$ or $C\equiv 7\pmod 8$};
see~\cite[Example 9.32]{AnKhr}, \cite[Example 3.14]{me:3},  or~\cite[Example
4.9]{me-CJ} for the proof based on
Theorem~\ref{thm:ergBool}.

Earlier in 1999 Kotomina~\cite{kotomina99} applied
Theorem~\ref{thm:ergBool} to prove the following
statement resulting in the so called \emph{add-xor generators},
which are extremely fast: \emph{The T-function
\[
f(x)=(\ldots((((x+c_0)\XOR d_0)+c_1)\XOR d_1)+\cdots
\]
is transitive modulo $2^n$ $(n\ge 2)$ if and only if it is
transitive modulo~$4$.}

The following proposition, whose proof is also based on
Theorem~\ref{thm:ergBool}, gives a method to
construct new invertible T-functions (respectively,
T-functions with a single cycle property), out of given
T-functions:

\begin{proposition}[see \cite{me:3,me-CJ} and \protect{\cite[Proposition 9.29]{AnKhr}}]
\label{prop:compBool}
Let $F$ be an $(n+1)$-variate T-function such that for all
$z_1,\ldots,z_n$ the T-function $F(x,z_1,\ldots,z_n)$ is
measure-preserving with respect to the variable $x$. Then the
composition $$F(f(x),2g_1(x),\ldots,2g_n(x))$$ is measure-preserving
for arbitrary T-functions $g_1,\ldots,g_n$ and any
invertible T-function $f$.

Moreover, if $f$ is ergodic, then
$f(x+4g(x))$, $f(x\XOR (4g(x)))$, $f(x)+4g(x)$, and
$f(x)\XOR (4g(x))$ are ergodic, for
arbitrary T-function $g$.
\end{proposition}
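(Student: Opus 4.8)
The plan is to prove both assertions of Proposition~\ref{prop:compBool} by reducing everything to the folklore criterion, Theorem~\ref{thm:ergBool}, which characterizes measure-preservation and ergodicity purely in terms of the ANFs of the coordinate functions $\psi_j$. So first I would fix notation: for a T-function $h$ write $\psi_j^{(h)}(\chi_0,\ldots,\chi_j)$ for its $j$-th coordinate function, and recall that measure-preservation of $h$ is equivalent to each $\psi_j^{(h)}$ being of the form $\chi_j \oplus \phi_j(\chi_0,\ldots,\chi_{j-1})$, i.e. \emph{linear in the top variable $\chi_j$}.

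For the first (measure-preservation) claim, I would analyze how the top variable $\chi_j$ of the input $x$ propagates through the composition $G(x) = F(f(x), 2g_1(x),\ldots,2g_n(x))$. The key structural observation is that the arguments $2g_i(x)$ are multiplied by $2$, so their $j$-th bit depends only on the $(j-1)$-th and lower bits of $g_i(x)$, hence only on $\chi_0,\ldots,\chi_{j-1}$; that is, the shifted arguments contribute \emph{nothing linear in $\chi_j$}. Meanwhile, since $f$ is measure-preserving (invertible), its $j$-th coordinate function is $\chi_j \oplus (\text{lower-order stuff})$, so the $j$-th bit of $f(x)$ is linear in $\chi_j$ with coefficient $1$. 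Feeding this into the first argument slot of $F$, and using the hypothesis that $F(\cdot,z_1,\ldots,z_n)$ is measure-preserving in its first variable for \emph{every} fixing of the other arguments, I would argue that the $j$-th coordinate of $G$ is again $\chi_j$ plus a function of strictly lower-order bits. The cleanest way to package this is probably to work $p$-adically: measure-preservation in the first variable, uniformly in the rest, should give a statement like $F(u + 2^j t, z) \equiv F(u,z) + 2^j t \pmod{2^{j+1}}$ for the relevant increments, and then track how a unit change in bit $j$ of $x$ forces a unit change in bit $j$ of $f(x)$ (invertibility) while leaving the $2g_i(x)$ contributions to bit $j$ fixed, so that bit $j$ of the output flips exactly. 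Applying Theorem~\ref{thm:ergBool} concludes measure-preservation of $G$.

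For the second (ergodicity) claim I would treat the four maps $f(x+4g(x))$, $f(x \XOR 4g(x))$, $f(x)+4g(x)$, $f(x)\XOR 4g(x)$ by the same bit-slice bookkeeping, now invoking the ergodicity half of Theorem~\ref{thm:ergBool}: on top of linearity in $\chi_j$, one needs every $\phi_j$ to have \emph{odd weight}, equivalently the ANF of $\phi_j$ ($j\ge 1$) to contain the full monomial $\chi_0\cdots\chi_{j-1}$, together with $\phi_0 = 1$. The role of the factor $4$ is that the perturbation $4g(x)$ affects only bits of index $\ge 2$, so it cannot disturb the conditions on $\psi_0$ and $\psi_1$ that anchor the ergodicity criterion (this is exactly why $4$ and not $2$ is required). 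For the multiplicative/additive perturbations I would show that adding or XOR-ing $4g(x)$ changes each high-order coordinate function of the ergodic map $f$ only by a correction supported on lower-order variables, hence preserves the parity/full-degree condition on $\phi_j$; here I can lean on the first part of the proposition, since e.g. $x \mapsto x + 4g(x)$ is itself a measure-preserving perturbation of the identity and composing an ergodic $f$ with a suitable measure-preserving map of this restricted type is arranged to stay ergodic.

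I expect the main obstacle to be making the parity/weight argument in the ergodic case fully rigorous rather than the measure-preservation propagation, which is essentially bookkeeping. The subtlety is that ergodicity is \emph{not} preserved under arbitrary measure-preserving composition, so the whole force of the statement lies in the special multiplier $4$ (and the factor $2$ in the first part): I must verify carefully that the monomial $\chi_0\cdots\chi_{j-1}$ survives in $\phi_j$ after the perturbation, i.e. that no cancellation of the top-degree term is introduced. The honest shortcut is to cite the folklore theorem's degree formulation and check that the perturbation $4g$ only injects ANF terms of degree $\le j-1$ into $\psi_j$, which cannot kill the degree-$j$ monomial; pinning down \emph{that} degree bound, uniformly in $g$, is the crux and is where I would spend the real effort.
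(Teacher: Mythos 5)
The paper does not actually prove Proposition~\ref{prop:compBool}: it is quoted from \cite{me:3,me-CJ} and \cite[Proposition 9.29]{AnKhr} with only the remark that the proof is ``based on Theorem~\ref{thm:ergBool}'', and your proposal follows exactly that advertised route, reducing everything to the ANF/folklore criterion, so there is no in-paper argument to diverge from. Your outline is sound: the measure-preservation part is correct bookkeeping (the factor $2$ pushes every bit of $2g_i(x)$ down to dependence on $\chi_0,\ldots,\chi_{j-1}$, and the hypothesis on $F$ gives $\Psi_j = \xi_j\oplus\Phi_j(\xi_0,\ldots,\xi_{j-1};\cdot)$ uniformly in the remaining arguments), and for the ergodicity part you correctly isolate the one nontrivial step, namely that the perturbation by $4g$ injects no monomial $\chi_0\cdots\chi_{j-1}$ into $\psi_j$; that step does close, via the carry recursion $c_{j+1}=\delta_j(u)\delta_j(v)\oplus(\delta_j(u)\oplus\delta_j(v))c_j$ together with the facts that $\delta_j(4g(x))$ never involves $\chi_{j-1}$ or $\chi_j$ and that precomposing a Boolean function with the bijection $(\chi_0,\ldots,\chi_{j-1})\mapsto(\psi_0^{(h)},\ldots,\psi_{j-1}^{(h)})$ preserves its weight.
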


Although Theorem~\ref{thm:ergBool} can be applied to
determine invertibility/single cycle property of some
T-functions, it is highly doubtful that one can prove,
with the use of Theorem~\ref{thm:ergBool} only,
that, e.g., the following function $f$ is a T-function that is ergodic (it is!):
\begin{multline}
\label{eq:wild}
f(x)=2+\frac{x}{3}+\frac{1}{3^x}+2\left(\frac{(x^2+2x)\XOR
(1/3)}{2x+3}\right)^\frac{(x+1)\AND(1/5)}{1-2x}+ \\ 
2\NOT\left(\left(\frac{(x^2-1)\XOR(1/3)}{2x+1}\right)^
\frac{x\AND(1/5)}{5-2x}\right).
\end{multline}

Therefore we need more delicate tools than
Theorem~\ref{thm:ergBool} to study complicated
compositions of basic machine instructions. These tools are provided by $p$-adic analysis and $p$-adic ergodic theory. The first
one
of the said tools are Mahler series.
\subsubsection{Criteria based on Mahler series}
It is not difficult to see that \emph{every} mapping
$f\colon\N_0\to \Z_p$ (or, respectively, $f\colon\N_0\to
\Z$) admits one and only one representation in the form of
so-called \emph{Mahler interpolation series}
\begin{equation}
\label{eq:Mah}
f(x)=\sum^{\infty}_{i=0}a_i{\binom{x}{i}},
\end{equation}
where $\binom{x}{i} =x(x-1)\cdots (x-i+1)/i!$ for
$i=1,2,\ldots$, and $ \binom{x}{0} =1$; $a_i\in \Z_p$
(respectively, $a_i\in \Z$), $i=0,1,2,\ldots $. This
statement can be easily proved directly, substituting
successively $x=0,1,2,\ldots$ to~\eqref{eq:Mah} and
solving the corresponding equation with unknown $a_x$.

Foremost, if $f$ is uniformly continuous on $\N_0$ with
respect to the $p$-adic distance, $f$ can be uniquely
expanded to a uniformly continuous function on $\Z_p$ since
$\Z$ is dense in $\Z_p$. Hence the interpolation series for
$f$ converges uniformly on $\Z_p$. The following is true
(see e.g.~\cite{Mah}): The series $
f(x)=\sum^{\infty }_{i=0}a_i {\binom{x}{i}}$, ($a_i\in
\Z_p$, $i=0,1,2,\ldots$) converges uniformly on $\Z_p$ if
and only if $ \lim^p_{i \to \infty }a_i=0$, where $\lim^p$
is a limit with respect to the $p$-adic distance; hence
uniformly convergent series defines a uniformly continuous
function on $\Z_p$.

The following theorem holds:

\begin{theorem}[see \cite{me:1,me:2} and \protect{\cite[Theorem 3.53]{AnKhr}}]
\label{thm:Mah-comp}
The function $f\colon\Z_p\to \Z_p$ represented
by~\eqref{eq:Mah} is compatible 
if and only if
\[
a_i\equiv 0\pmod {p^{\lfloor\log_pi\rfloor}}
\]
for all $i=2,3,4,\ldots $. \textup{(Here and after for a real
$\alpha $ we denote $\lfloor\alpha \rfloor$ the integral part
of $\alpha $, i.e., the nearest to $\alpha $ rational
integer that is not larger than $\alpha$.)}
\end{theorem}
\begin{remark}
We remind that in the case $p=2$ compatible functions are T-functions,
and vice versa.
\end{remark}
\begin{remark}
\label{rem:log}
Note that the number $\lfloor\log_pi\rfloor$ for
$i=1,2,3,\ldots$ has a very natural meaning: it is the
number of digits in a base-$p$ expansion of $i$,
decreased by $1$. That is, $\lfloor\log_pi\rfloor$ is a bitlength of $i$, decreased by
$1$. So within the context of the paper it is reasonable to
assume that $\lfloor\log_p0\rfloor=0$.
\end{remark}

Now we can give  general characterization of
measure-preserving (resp., ergodic) T-functions:

\begin{theorem}[see \cite{me:1,me:conf} and \protect{\cite[Theorem 4.40]{AnKhr}}]
\label{thm:ergBin}
A map $f\colon\Z_2\to \Z_2$ is a measure preserving
T-function if and only if it can be represented as
\[
f(x)=c_0+x+\sum^{\infty }_{i=1}c_i\,2^{\lfloor \log_2 i
\rfloor +1}\binom{x}{i}. 
\]
The map $f$ is an ergodic T-function if and only if
it can be represented as
\[
f(x)=1+x+\sum^{\infty}_{i=1}c_i2^{\lfloor
\log_2(i+1)\rfloor+1}\binom{x}{i}, 
\]
where $c_0,c_1, c_2 \ldots \in \Z_2$.
\end{theorem}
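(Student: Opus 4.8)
The plan is to reduce everything, via Theorem~\ref{thm:erg-tran}, to statements about the reductions $f\bmod 2^k$, and then translate those into conditions on the Mahler coefficients $a_i$ of $f(x)=\sum_{i\ge 0}a_i\binom{x}{i}$. Throughout I will use the elementary valuation $\ord_2\binom{2^n}{d}=n-\ord_2 d$ for $1\le d\le 2^n$ (Kummer/Legendre) and the Vandermonde expansion $\binom{x+2^n}{i}-\binom{x}{i}=\sum_{d\ge 1}\binom{2^n}{d}\binom{x}{i-d}$. The two purely combinatorial reformulations I establish first are: (a) a T-function $f$ is bijective modulo every $2^k$ iff $f(x+2^n)\equiv f(x)+2^n\pmod{2^{n+1}}$ for all $n\ge 0$ and all $x$; and (b) a measure-preserving $f$ that is transitive modulo $2^n$ is transitive modulo $2^{n+1}$ iff $f^{2^n}(0)\equiv 2^n\pmod{2^{n+1}}$. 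Both follow by the standard ``one extra bit'' bookkeeping: under the T-function property the $2^{n+1}$ residues split into the pairs $\{a,a+2^n\}$, and bijectivity (resp.\ the doubling of the cycle length) is governed by whether the new bit $\delta_n$ is flipped.

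For the measure-preserving criterion I pass to $g:=f-x$, with Mahler coefficients $b_0=a_0$, $b_1=a_1-1$, and $b_i=a_i$ for $i\ge 2$. Then (a) becomes $g(x+2^n)\equiv g(x)\pmod{2^{n+1}}$ for all $n,x$. Expanding by Vandermonde, the $\ell$-th Mahler coefficient of $g(x+2^n)-g(x)$ is $c^{(n)}_\ell=\sum_{d\ge 1}b_{\ell+d}\binom{2^n}{d}$; since a continuous function maps $\Z_2$ into $2^{n+1}\Z_2$ iff all its Mahler coefficients lie in $2^{n+1}\Z_2$, the criterion is exactly $c^{(n)}_\ell\equiv 0\pmod{2^{n+1}}$ for all $n,\ell$. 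Sufficiency of the claimed bound $\ord_2 b_i\ge \lfloor\log_2 i\rfloor+1$ $(i\ge 1)$ is immediate: each summand has $\ord_2\ge(\lfloor\log_2 d\rfloor+1)+(n-\ord_2 d)\ge n+1$, using $\lfloor\log_2 d\rfloor\ge\ord_2 d$. For necessity I argue by induction on $n$ that $c^{(n)}_\ell\equiv 0\pmod{2^{n+1}}$ forces $\ord_2 b_m\ge n+1$ for all $m\ge 2^n$: in $c^{(n)}_\ell=\sum_{d=1}^{2^n}b_{\ell+d}\binom{2^n}{d}$ only the top term $d=2^n$ has an odd binomial coefficient, while the inductive bounds place every $1\le d\le 2^n-1$ term in $2^{n+1}\Z_2$, exposing $b_{\ell+2^n}\equiv 0\pmod{2^{n+1}}$. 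Taking $n=\lfloor\log_2 m\rfloor$ yields the stated divisibility, and Theorem~\ref{thm:Mah-comp} certifies that $f$ is a T-function. Translating back to the $a_i$ gives the first displayed form.

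For ergodicity I begin from a measure-preserving $f$ and impose (b) inductively. Telescoping along the orbit $x_j=f^j(0)$ gives $f^{2^n}(0)=\sum_{j=0}^{2^n-1}(f(x_j)-x_j)$; since $f$ is transitive modulo $2^n$ the residues $x_j\bmod 2^n$ exhaust $\Z/2^n\Z$, and measure-preservation gives $f(a+2^n)-(a+2^n)\equiv f(a)-a\pmod{2^{n+1}}$, so the choice of lift is irrelevant and $f^{2^n}(0)\equiv\sum_{a=0}^{2^n-1}(f(a)-a)\pmod{2^{n+1}}$. By the hockey-stick identity $\sum_{a=0}^{2^n-1}\binom{a}{i}=\binom{2^n}{i+1}$ this equals $a_0 2^n+\sum_{i\ge 1}b_i\binom{2^n}{i+1}$, so transitivity modulo every power of $2$ amounts to $a_0$ odd together with $\sum_{i\ge 1}b_i\binom{2^n}{i+1}\equiv 0\pmod{2^{n+1}}$ for all $n\ge 1$. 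The same valuation count shows every such term already lies in $2^{n+1}\Z_2$ except those with $i+1=2^s$; isolating these and writing $b_{2^s-1}=2^s u_s$ reduces the condition to $\sum_{s=1}^n (u_s\bmod 2)\equiv 0\pmod 2$ for all $n$, hence $\ord_2 b_{2^s-1}\ge s+1$ for every $s\ge 1$. Since $\lfloor\log_2(i+1)\rfloor$ exceeds $\lfloor\log_2 i\rfloor$ exactly at $i=2^s-1$, this is precisely the strengthening $\ord_2 a_i\ge\lfloor\log_2(i+1)\rfloor+1$ (the case $s=1$ reading $a_1\equiv 1\pmod 4$), yielding the second displayed form.

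The main obstacle is the necessity half of the measure-preserving criterion: the induction on $n$ must be organized so that the previously established bounds on $\ord_2 b_m$ for $m<2^n$, together with the weaker bound $\ge n$ for $m\ge 2^{n-1}$, suffice to kill every $d\ne 2^n$ contribution modulo $2^{n+1}$ and leave the single coefficient $b_{\ell+2^n}$ exposed. A secondary delicate point, in the ergodic part, is the replacement of the orbit sum by $\sum_{a=0}^{2^n-1}(f(a)-a)$, which rests essentially on measure-preservation forcing the two lifts $a$ and $a+2^n$ to contribute equally modulo $2^{n+1}$. I note finally that the argument pins the constant term only to $a_0$ odd (for instance $x\mapsto x+3$ is ergodic), so the displayed ``$1+x$'' is best read with the free odd constant supplied by $c_0$.
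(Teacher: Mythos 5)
The paper does not actually prove Theorem~\ref{thm:ergBin}: it is quoted from \cite{me:1,me:conf} and \cite[Theorem 4.40]{AnKhr} without proof, so there is no in-text argument to compare yours against. Taken on its own, your proof is correct and self-contained, and it is organized sensibly: you reduce via Theorem~\ref{thm:erg-tran} to bijectivity/transitivity modulo all $2^k$, recast these as the bit-flip condition $f(x+2^n)\equiv f(x)+2^n\pmod{2^{n+1}}$ and the cycle-doubling condition $f^{2^n}(0)\equiv 2^n\pmod{2^{n+1}}$, and then translate both into Mahler-coefficient congruences using $\ord_2\binom{2^n}{d}=n-\ord_2 d$, Vandermonde, and the hockey-stick identity. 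The point you flag as the main obstacle does close: if the inductive statement is taken as ``$\ord_2 b_m\ge j+1$ for \emph{all} $m\ge 2^j$'' for each $j<n$, then for a term with $d<2^n$ and $v=\ord_2 d\le n-1$ one has $\ell+d\ge 2^v$, so the level-$v$ hypothesis gives $\ord_2 b_{\ell+d}\ge v+1$ and the term lands in $2^{n+1}\Z_2$, isolating $b_{\ell+2^n}$. I note two further merits: (1) you avoid Corollary~\ref{Delta}, which the paper \emph{derives from} Theorem~\ref{thm:ergBin}, so your argument is not circular where a naive one might be; (2) you correctly detect that the ergodic display as printed (sum starting at $i=1$ with no $c_0$) forces $f(0)=1$ and would wrongly exclude $x\mapsto x+3$; the intended statement (consistent with the listed ``$c_0,c_1,\ldots$'' and with \cite[Theorem 4.40]{AnKhr}) has the sum starting at $i=0$, contributing the free term $2c_0$, and your derivation ($a_0$ odd, $a_1\equiv 1\pmod 4$, $\ord_2 a_i\ge\lfloor\log_2(i+1)\rfloor+1$ for $i\ge 2$) matches exactly that corrected form.
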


Using the identity $\Delta\binom{x}{i}=\binom{x}{i-1}$,
where $\Delta$ is a difference operator, $\Delta
u(x)=u(x+1)-u(x)$, we immediately deduce from
Theorems~\ref{thm:Mah-comp}
and~\ref{thm:ergBin} the following \emph{easy method
to construct a measure preserving or ergodic T-function
out of an arbitrary T-function}:

\begin{corollary}[\cite{me:2}, also \protect{\cite[Theorem 4.44]{AnKhr}}]
\label{Delta}
Every ergodic \textup{(}respectively, every measure preserving\textup{)}
T-function $f\colon\Z_2\to \Z_2$ can be represented as
\[
f(x)=1+x+2\cdot\Delta g(x)
\]
\textup{(}respectively, as $f(x)=d+x+2\cdot g(x)$\textup{)} for a suitable
$d\in\Z_2$ and a suitable T-function $g\colon\Z_2\to \Z_2$; and
vice versa, every function $f$ of the above form is, accordingly, ergodic or
measure-preserving T-function.
\end{corollary}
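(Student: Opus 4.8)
The plan is to reduce both equivalences to purely coefficient-level statements about Mahler expansions, where everything becomes bookkeeping with powers of $2$. Three ingredients do the work: uniqueness of the Mahler expansion (so that an identity between two functions is equivalent to equality of their coefficient sequences); Theorem~\ref{thm:Mah-comp}, which reads ``$g$ is a T-function'' as the divisibility $a_i\equiv 0\pmod{2^{\lfloor\log_2 i\rfloor}}$ on the Mahler coefficients $a_i$ of $g$; and the difference identity $\Delta\binom{x}{i}=\binom{x}{i-1}$. The last identity says that if $g(x)=\sum_{j\ge 0}b_j\binom{x}{j}$ then $\Delta g(x)=\sum_{i\ge 0}b_{i+1}\binom{x}{i}$, so applying $\Delta$ simply shifts the coefficient index down by one. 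I would prove the measure-preserving equivalence first and then mimic it for the ergodic one with $\Delta$ inserted.

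\emph{Measure-preserving case.} Starting from an $f$ written in the measure-preserving form of Theorem~\ref{thm:ergBin}, I would pull a single factor $2$ out of the series, writing $f(x)=c_0+x+2\sum_{i\ge 1}c_i 2^{\lfloor\log_2 i\rfloor}\binom{x}{i}$, and set $d:=c_0$ and $g(x):=\sum_{i\ge 1}c_i 2^{\lfloor\log_2 i\rfloor}\binom{x}{i}$. The Mahler coefficient of $g$ at index $i\ge 2$ is $c_i 2^{\lfloor\log_2 i\rfloor}$, which is divisible by $2^{\lfloor\log_2 i\rfloor}$; by Theorem~\ref{thm:Mah-comp} this is exactly what makes $g$ a T-function. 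Conversely, given any T-function $g$ and any $d\in\Z_2$, I would expand $g$, multiply its coefficients by $2$, and note that doubling raises each divisibility exponent by one: in $d+x+2g(x)$ the coefficient of $\binom{x}{i}$ is divisible by $2^{\lfloor\log_2 i\rfloor+1}$ for $i\ge 2$, the coefficient of $x$ is odd, and the constant is arbitrary in $\Z_2$, which is precisely the measure-preserving form of Theorem~\ref{thm:ergBin}. Uniqueness of the Mahler expansion closes both directions.

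\emph{Ergodic case and the main obstacle.} The same bookkeeping is run with $\Delta$ inserted. Given an ergodic $f$ in the form of Theorem~\ref{thm:ergBin}, I would define $g$ by prescribing its Mahler coefficients, setting $b_{i+1}:=c_i 2^{\lfloor\log_2(i+1)\rfloor}$ for $i\ge 1$ and leaving the lowest coefficients free; the shift identity then yields $\Delta g(x)=\sum_{i\ge 1}c_i 2^{\lfloor\log_2(i+1)\rfloor}\binom{x}{i}$, so that $f(x)=1+x+2\Delta g(x)$. That this $g$ is a T-function is immediate: its coefficient at index $j=i+1\ge 2$ is $c_i 2^{\lfloor\log_2 j\rfloor}$, divisible by $2^{\lfloor\log_2 j\rfloor}$, exactly meeting the Theorem~\ref{thm:Mah-comp} threshold. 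For the converse I would take an arbitrary T-function $g$, apply $\Delta$ through the shift identity, double, and read off the Mahler coefficients of $1+x+2\Delta g$. I expect the delicate step to be the index shift itself: the divisibility threshold attached to coefficient $i$ of $\Delta g$ is the one inherited from coefficient $i+1$ of $g$, producing the exponent $\lfloor\log_2(i+1)\rfloor$ rather than $\lfloor\log_2 i\rfloor$, and one must verify that this shifted exponent is exactly the one appearing in the ergodic form of Theorem~\ref{thm:ergBin} for every $i\ge 1$, with no off-by-one loss. Equally, the two lowest terms need care, since the shift feeds $b_1$ and $b_2$ into the constant term and the coefficient of $x$ respectively: I would check that the constant term is odd and that the coefficient of $x$ is $\equiv 1\pmod 4$ (the latter because compatibility of $g$ forces $b_2$ to be even), as required for ergodicity, and confirm that the residual freedom in the lowest coefficients of $g$ matches the admissible freedom in $f$.
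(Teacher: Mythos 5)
Your proposal is correct and follows exactly the route the paper itself indicates for this corollary: it is "immediately deduced" from Theorems~\ref{thm:Mah-comp} and~\ref{thm:ergBin} via the shift identity $\Delta\binom{x}{i}=\binom{x}{i-1}$, and the paper gives no further details beyond that sketch. Your write-up simply fills in the coefficient bookkeeping (including the correct observation that compatibility of $g$ forces its second Mahler coefficient to be even, yielding the $\equiv 1\pmod 4$ condition on the coefficient of $x$), with the only cosmetic caveat that in the forward ergodic direction the coefficient $b_1$ of $g$ is pinned down by the constant term of $f$, so only $b_0$ is genuinely free.
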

\begin{remark}
Ergodicity of the T-function \eqref{eq:wild} can be proved with the use of
Corollary \ref{Delta}
\end{remark}
\subsubsection{Why  new techniques are needed}
Theorems \ref{thm:ergBool} and \ref{thm:ergBin}
give methods to construct  measure-preserving/ergodic T-functions from
arithmetic and bitwise logical computer instructions; these methods may be
too difficult (or even impossible) to use when a T-function is a composition
of both arithmetic and bitwise logical operations like masking ($\mask(x,c)=x\AND c$; i.e. when the composition includes $i$-th bit value functions $\delta_i(x)$. 
For instance,  not speaking of more complicated
compositions, it is quite difficult to determine measure
preservation/ergodicity even in the simplest case when $f$ is a linear combination
of $\delta_i(x)$: for the corresponding (rather involved) proof see \cite[Theorem
9.20]{AnKhr} or  \cite{me-CJ}. 
 However, functions of this sort are easy to implement in software
 and hardware;
 moreover, they were already used in some 
ciphers, see e.g.  \cite{ABCv3}. There are other techniques  in $p$-adic ergodic
theory than
the already mentioned ones, like the methods that exploit uniform differentiability (we do not mention the ones in the paper; see \cite[Section
4.6]{AnKhr} about these). However, T-functions that include compositions
of bitwise logical operations are rarely uniformly differentiable.

That is one of reasons why we need new techniques to handle
T-functions of this sort. These new techniques are based on representation
of T-functions in the form of van der Put series. 

The other reason is that using the van der Put representations turns out
to be a general way to speed-up T-function-based cryptographic algorithms via time-memory trade-offs since
actually evaluation of a T-function represented by van der Put series uses
just memory calls and integer summations.

It should be stressed however that compared to the known criteria the new ones are not superior in all cases,
for all T-functions: An answer to the question which of the criteria is better to use in order to determine
bijectivity/transitivity of a given T-function  strongly depends on a composition of the T-function. In some cases a particular criterion just works better than others. For
instance, to determine when a linear combination of $\delta_i(x)$ is transitive
one may use either criteria based on Mahler series (Theorem \ref{thm:ergBin}),
or criteria based on ANFs of coordinate functions (Theorem \ref{thm:ergBool})
or new criteria based on van der Put series (Theorem \ref{thm:ergnew}). Then,
exploiting
the criterion based on Mahler series results in a long involved proof
(cf. \cite[Theorem
9.20]{AnKhr}) while application of the ANF-based  criterion seemingly
results in a shorter one, whereas the use of the new criterion implies a very
short proof, cf. Example  \ref{ex:delta}. We believe (and give
some evidence in Section \ref{sec:App}) that the new criteria
are most suitable for T-functions whose compositions involve numerous bitwise
logical
instructions. However,  the ANF-based criteria may be as effective (or even
better) if a T-function is a relatively
short composition of bitwise logical and/or arithmetic instructions, whereas
the criteria based on Mahler series are seemingly more useful for `classical-shaped'
functions (2-adic exponential functions, rational functions, etc.); and for
`smooth' T-functions it is reasonable to try first criteria that exploit
differentiability, cf. \cite[Section 4.6; Subsection 9.2.2]{AnKhr}.
Moreover, results of the current paper, after being announced (without proofs) in \cite{AKY-DAN},
already have stimulated development of ergodic theory in the ring $\F_2[[X]]$ of formal
power series over a two-element field $\F_2$, see recent paper \cite{ShiTao:erg_F_2[[X]]}.
Maybe the criteria developed in the latter paper can also be 
applied to  determine bijectivity/transitivity of certain T-functions as
well since the metric space of all infinite binary sequences with the metric
$d_2$ can also be treated as the metric space $\F_2[[X]]$ (rather than the
metric space $\Z_2$). We believe that no universal `superior' criterion exists;
nonetheless, every new criterion enriches  researchers' toolbox making it more
diverse and therefore giving more flexibility when determining bijectivity/transitivity
of a given T-function.

\subsection{Van der Put series}
\label{sec:vdp}
Now we recall the 
definition and some properties of van der Put series, see e.g. \cite{Mah,Sch}
for  details.
Given a continuous 
function 
$f\: \Z_p\rightarrow \Z_p$, 
there exists 
a unique sequence 
$B_0,B_1,B_2, \ldots $ 
of $p$-adic integers such that 

\begin{equation}
\label{vdp}
f(x)=\sum_{m=0}^{\infty}
B_m \chi(m,x) 
\end{equation}
for all $x \in \Z_p$, where 
\begin{displaymath}
\chi(m,x)=\left\{ \begin{array}{cl}
1, &\text{if}
\ \left|x-m \right|_p \leq p^{-n} \\
0, & \text{otherwise}
\end{array} \right.
\end{displaymath}
and $n=1$ if $m=0$; $n$ is uniquely defined by the inequality 
$p^{n-1}\leq m \leq p^n-1$ otherwise. The right side series in \eqref{vdp} is called the \emph{van der Put series} of the function $f$. Note that
the sequence $B_0, B_1,\ldots,B_m,\ldots$ of \emph{van der Put coefficients} of
the function $f$ tends $p$-adically to $0$ as $m\to\infty$, and the series
converges uniformly on $\Z_p$. Vice versa, if a sequence $B_0, B_1,\ldots,B_m,\ldots$
of $p$-adic integers tends $p$-adically to $0$ as $m\to\infty$, then  the series in the right
part of \eqref{vdp} converges uniformly on $\Z_p$ and thus defines a continuous
function $f\colon \Z_p\to\Z_p$.

The number $n$ in the definition of $\chi(m,x)$ has a very natural meaning;
it is just the number of digits in a base-$p$ expansion of $m\in\N_0$: As
said (see Remark \ref{rem:log}),
\[
\left\lfloor  \log_p m \right\rfloor 
=
\left(\text{the number of digits in a base-} p \;\text{expansion for} \;m\right)-1;
\]
therefore $n=\left\lfloor  \log_p m \right\rfloor+1$ for all $m\in\N_0$ (recall
that we assume  $\left\lfloor  \log_p 0 \right\rfloor=0$). 

 Note that 
the coefficients $B_m$ are
related to the values of the function $f$ in the following way:
Let 
$m=m_0+ \ldots +m_{n-2} p^{n-2}+m_{n-1} p^{n-1}$ be a base-$p$ expansion
for $m$, i.e., 
 $ m_j\in \left\{0,\ldots ,p-1\right\}$, $j=0,1,\ldots,n-1$ and $m_{n-1}\neq 0$, then
\begin{equation}
\label{eq:vdp-coef}
B_m=
\begin{cases}
f(m)-f(m-m_{n-1} p^{n-1}),\ &\text{if}\ m\geq p;\\
f(m),\ &\text{if otherwise}.
 
\end{cases}
\end{equation}
It is worth noticing  also that $\chi (m,x)$ is merely  a characteristic function of the ball $\mathbf B_{p^{-\left\lfloor  \log_p m \right\rfloor-1}}(m)=m+p^{\left\lfloor  \log_p m \right\rfloor-1}\Z_p$
of radius $p^{-\left\lfloor  \log_p m \right\rfloor-1}$ centered at $m\in\N_0$:
\begin{equation}
\label{eq:chi}
\chi(m,x)=\begin{cases}
1,\ &\text{if}\ x\equiv m \pmod{p^{\left\lfloor  \log_p m \right\rfloor+1}};\\
0,\ &\text{if otherwise}
 
\end{cases}
 =
\begin{cases}
1,\ &\text{if}\ x\in\mathbf B_{p^{-\left\lfloor  \log_p m \right\rfloor-1}}(m);\\
0,\ &\text{if otherwise}
 
\end{cases}
\end{equation}

\section{Main results}
In this Section we prove criteria for measure-preservation/ergodicity for
a T-function in terms of van der Put coefficients of the T-function. However,
we start with a van der Put coefficients based criterion for  compatibility
of a continuous $p$-adic map $\Z_p\>\Z_p$. In the case $p=2$ the criterion
yields necessary and sufficient conditions for a map $\Z_2\>\Z_2$ to be a T-function.
\label{sec:main}
\subsection{The compatibility criterion in terms of van der Put coefficients.}
We first  prove the compatibility criterion 
for 
arbitrary map $\Z_p\>\Z_p$ represented by van der Put series.
\begin{theorem}[Compatibility criterion]
\label{thm:comp}
Let a function $f\colon \Z_p\rightarrow \Z_p$ be represented via van der Put series
\eqref{vdp};  then $f$ is compatible \textup{(that is, satisfies the $p$-adic
Lipschitz
condition with a constant 1)} if and only if $|B_m|_p\le
p^{-\left\lfloor \log_p m \right\rfloor}$ for all $m=0,1,2,\ldots$. 
\end{theorem}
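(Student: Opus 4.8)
The plan is to work throughout with the congruence form of compatibility: by \eqref{eq:cong-dist}, $f$ is $1$-Lipschitz if and only if $u\equiv v\pmod{p^k}$ implies $f(u)\equiv f(v)\pmod{p^k}$ for all $u,v\in\Z_p$ and all $k$. I would prove the two directions separately, and in both the decisive point is that, by \eqref{eq:chi}, $\chi(m,\cdot)$ is the indicator of the ball of radius $p^{-(\lfloor\log_p m\rfloor+1)}$ about $m$; thus the ``level'' $\lfloor\log_p m\rfloor$ attached to $B_m$ in the claimed bound is exactly the level at which $\chi(m,\cdot)$ ceases to be constant.

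For necessity, assume $f$ is compatible. For $m\ge p$ set $n=\lfloor\log_p m\rfloor+1$ and $m'=m-m_{n-1}p^{n-1}$, the integer obtained by deleting the leading base-$p$ digit of $m$. Then $m-m'=m_{n-1}p^{n-1}$ with $m_{n-1}\ne0$, so $|m-m'|_p=p^{-(n-1)}=p^{-\lfloor\log_p m\rfloor}$. Formula \eqref{eq:vdp-coef} gives $B_m=f(m)-f(m')$, and compatibility applied to the pair $m,m'$ yields $|B_m|_p=|f(m)-f(m')|_p\le|m-m'|_p=p^{-\lfloor\log_p m\rfloor}$, the asserted inequality. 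For $0\le m<p$ we have $\lfloor\log_p m\rfloor=0$, so the required bound $|B_m|_p\le1$ holds automatically because $B_m\in\Z_p$.

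For sufficiency, assume $|B_m|_p\le p^{-\lfloor\log_p m\rfloor}$ for all $m$ and fix $u,v$ with $u\equiv v\pmod{p^k}$. Writing $f(u)-f(v)=\sum_{m=0}^\infty B_m\bigl(\chi(m,u)-\chi(m,v)\bigr)$, I would first isolate which terms can be nonzero. By \eqref{eq:chi}, $\chi(m,\cdot)$ is constant on every ball of radius $p^{-(\lfloor\log_p m\rfloor+1)}$; hence whenever $\lfloor\log_p m\rfloor+1\le k$, the points $u,v$, being congruent modulo $p^k$ and so lying in one such ball, satisfy $\chi(m,u)=\chi(m,v)$, and the term vanishes. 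Thus only indices with $\lfloor\log_p m\rfloor\ge k$ contribute, and for each of these $|B_m(\chi(m,u)-\chi(m,v))|_p\le|B_m|_p\le p^{-\lfloor\log_p m\rfloor}\le p^{-k}$. Since $B_m\to0$ the series converges, so the strong triangle inequality \eqref{eq:s-tri}, extended to convergent sums, bounds $|f(u)-f(v)|_p$ by the maximum of the termwise norms, giving $|f(u)-f(v)|_p\le p^{-k}$, i.e. $f(u)\equiv f(v)\pmod{p^k}$. As $k$ is arbitrary, $f$ is compatible.

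The routine parts are the digit bookkeeping and the two elementary facts about balls; the step deserving the most care is the sufficiency argument, where I must correctly pin down that $\chi(m,\cdot)$ can separate $u$ from $v$ only when its ball is strictly finer than level $k$ (that is, $\lfloor\log_p m\rfloor\ge k$), and then invoke the ultrametric estimate on the \emph{infinite} sum rather than on a finite partial sum. This is precisely where the non-Archimedean nature of the metric does the real work, since the analogous term-by-term bound would fail over $\R$.
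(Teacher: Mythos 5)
Your proof is correct. The necessity half coincides with the paper's argument: both delete the leading base-$p$ digit of $m$, identify $B_m=f(m)-f(m-m_{n-1}p^{n-1})$ via \eqref{eq:vdp-coef}, and apply compatibility to that pair of points, which lie at distance exactly $p^{-\lfloor\log_p m\rfloor}$. The sufficiency half, however, takes a genuinely different route. The paper first reduces to $x,y\in\N_0$ by density and continuity, expands the increment $h$ in base $p$, and telescopes $f(x+hp^n)-f(x)$ through intermediate points so that each successive difference is itself a van der Put coefficient (formula \eqref{eq:vdp-comp-f}), each of norm at most $p^{-n}$, after which the strong triangle inequality finishes. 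You instead estimate the series termwise for arbitrary $u\equiv v\pmod{p^k}$: the locality of $\chi(m,\cdot)$ recorded in \eqref{eq:chi} (it is constant on every ball of radius $p^{-\lfloor\log_p m\rfloor-1}$) annihilates all terms with $\lfloor\log_p m\rfloor<k$, the hypothesis bounds each surviving term by $p^{-k}$, and the ultrametric inequality \eqref{eq:s-tri}, extended to convergent sums, gives the Lipschitz bound directly. Your route buys you two things: it dispenses with the density reduction entirely, and it avoids the bookkeeping needed to recognize each telescoped difference as a coefficient $B_{(\cdot)}$ via \eqref{eq:vdp-coef} --- a step which in the paper's version tacitly requires the digit being removed to be the \emph{leading} digit of the index, i.e.\ implicitly assumes $x$ is small relative to $p^{n+n_j}$. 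What the paper's route buys in exchange is an explicit finite-sum formula for increments of $f$ in terms of its van der Put coefficients, which is of independent use (it is essentially the identity behind the knapsack-like evaluation algorithm of Subsection \ref{ssec:t-m}). Both arguments ultimately rest on the same two ingredients, formula \eqref{eq:vdp-coef} and the strong triangle inequality.
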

In other words, $f$ is compatible if and only if it can be represented as
\begin{equation}
\label{eq:vdp-comp}
f(x)=\sum_{m=0}^{\infty}
p^{\left\lfloor \log_p m \right\rfloor} b_m \chi(m,x),
\end{equation}
for suitable $b_m\in \Z_p$; $m=0,1,2,\ldots$. In particular, \emph{every T-function $f\colon\Z_2\>\Z_2$
can be represented as
\begin{equation}
\label{eq:vdp-t}
f(x)=\sum_{m=0}^{\infty}
2^{\left\lfloor \log_2 m \right\rfloor} b_m \chi(m,x),
\end{equation}
where $b_m\in \Z_2$, $m=0,1,2,\ldots$; and vice versa, the series \eqref{eq:vdp-t}
defines a T-function}.

\begin{proof}[Proof of Theorem \ref{thm:comp}]
To prove the necessity of conditions, take $m\in\N_0$ and consider its base-$p$-expansion
$m = m_0+ \ldots +m_{n-1}p^{n-1}$. Here
$m_j \in \{0, \ldots ,p-1\}$, $m_{n-1} \neq 0$, and  
$n=\lfloor\log_pm\rfloor+1$. As
\[
m_0+ \ldots +m_{n-2}p^{n-2} \equiv m_0+ \ldots +m_{n-2}p^{n-2}+m_{n-1}p^{n-1} \pmod{p^{n-1}};
\]
then
\[
f(m_0+ \ldots +m_{n-2}p^{n-2}) \equiv f(m_0+ \ldots +m_{n-1}p^{n-1})\pmod{p^{n-1}}
\]
by the compatibility of  $f$. From the latter congruence in view of  \eqref{eq:vdp-coef}  it follows
that $B_m \equiv 0 \pmod{p^{n-1}}$ for $m\ge p$; so $|B_m|\le p^{-\lfloor\log_pm\rfloor}$.

Now we prove the sufficiency of conditions. 
As
$\vert B_m \vert _p \leq p^{-\lfloor\log_p m\rfloor}$,
the sequence $B_0,B_1,\ldots$ tends $p$-adically
to 0 as $m\to\infty$ and so the function $f$ is continuous. Hence while proving
that
$|f(x)-f(y)|_p\le|x-y|_p$ for $x,y\in\Z_p$ we may assume that $x,y\in\N_0$
as $\N_0$ is dense in $\Z_p$. Moreover, by same reasons to prove that $f$ satisfies $p$-adic
Lipschitz condition with a constant 1 it suffices only to prove that given
$x\in\N_0$ and $h,n\in\N$,
$|f(x+hp^n)-f(x)|_p\le p^{-n}$.

Let $h=h_0+h_1p+ \ldots +h_\ell p^\ell$ be a base-$p$ expansion of $h\in\N$, and
let
$n_0<n_1<n_2< \ldots< n_k$ be all indices in the latter base-$p$ expansion
such that $h_{n_0},h_{n_1}, \ldots, h_{n_k}$ 
are nonzero; so
$h=h_{n_0}p^{n_0}+h_{n_1}p^{n_1}+ \ldots +h_{n_k}p^{n_k}$.
Now in view of \eqref{eq:vdp-coef} we have that
\begin{multline}
\label{eq:vdp-comp-f}
f(x+hp^n)=
f(x)+[f(x+p^nh_{n_0}p^{n_0})-f(x)]+\\
\sum_{j=1}^k [f(x+p^n(h_{n_0}p^{n_0}+ \cdots +h_{n_j}p^{n_j}))-
f(x+p^n(h_{n_0}p^{n_0}+ \cdots +h_{n_{j-1}}p^{n_{j-1}}))]=\\
f(x)+\sum_{j=0}^k B_{x+h_{n_0}p^{n+n_0}+ \cdots +h_{n_j}p^{n+n_j}}
\end{multline}

However, by our assumption,
$$\vert B_{x+h_{n_0}p^{n+n_0}+ \cdots +h_{n_j}p^{n+n_j}} \vert _p \leq p^{-(n+n_j)};$$
so \eqref{eq:vdp-comp-f} implies that
$\vert f(x+hp^n)-f(x) \vert _p \leq p^{-n}$
due to the strong triangle inequality that holds for the $p$-adic absolute value, cf. \eqref{eq:s-tri}.
\end{proof}

\subsection{The measure-preservation criteria for  T-functions,
in terms of the van der Put coefficients.}
\label{ssec:mes}

Now we prove two criteria of measure-preservation for
T-functions.

\begin {theorem}
\label{thm:vdp-mespres}
Let $f\:\Z_2\rightarrow \Z_2$ be  a T-function  represented via van der Put
series \eqref{vdp}. 
The T-function $f$ is measure-preserving  if and only if the following conditions
hold simultaneously:
\begin{enumerate}
\item
$B_0+B_1\equiv 1 \pmod2$;
\item
$\left| B_m\right|_2=2^{-\lfloor\log_2m\rfloor}$, 
$m=2,3,\ldots$.
\end{enumerate}
\end{theorem}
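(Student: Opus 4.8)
The plan is to reduce measure-preservation to bijectivity modulo every power of $2$ and then translate the bijectivity condition at each level into a statement about a single van der Put coefficient. By Theorem \ref{thm:erg-tran} (with $p=2$, $m=n=1$), $f$ preserves $\mu_2$ if and only if $f \bmod 2^\nu$ is a bijection of $\Z/2^\nu\Z$ for every $\nu \ge 1$. Since $f$ is $1$-Lipschitz, it maps each ball $a + 2^{\nu-1}\Z_2$ into a single ball modulo $2^{\nu-1}$, so I would argue by induction on $\nu$: assuming $f$ is bijective modulo $2^{\nu-1}$, it is bijective modulo $2^\nu$ precisely when, for every $a \in \{0,\ldots,2^{\nu-1}-1\}$, the two lifts $a$ and $a + 2^{\nu-1}$ of the residue $a \bmod 2^{\nu-1}$ are sent to distinct residues modulo $2^\nu$. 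Because $f(a) \equiv f(a+2^{\nu-1}) \pmod{2^{\nu-1}}$ already holds by $1$-Lipschitzness, this distinctness is equivalent to
\begin{equation*}
f(a+2^{\nu-1}) - f(a) \equiv 2^{\nu-1} \pmod{2^\nu}, \qquad 0 \le a < 2^{\nu-1}.
\end{equation*}

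The second step is to read these differences off the van der Put coefficients. For $m = a + 2^{\nu-1}$ with $0 \le a < 2^{\nu-1}$ one has $\lfloor \log_2 m \rfloor = \nu-1$ and leading base-$2$ digit $m_{\nu-1}=1$, so formula \eqref{eq:vdp-coef} gives $B_m = f(m) - f(m-2^{\nu-1}) = f(a+2^{\nu-1}) - f(a)$ as soon as $m \ge 2$, i.e. as soon as $\nu \ge 2$. As $a$ runs through $\{0,\ldots,2^{\nu-1}-1\}$ the index $m$ runs through exactly the integers with $\lfloor \log_2 m \rfloor = \nu-1$, so the bijectivity condition at level $\nu \ge 2$ becomes $B_m \equiv 2^{\nu-1} \pmod{2^\nu}$ for all such $m$. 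Since $f$ is already assumed to be a T-function, Theorem \ref{thm:comp} gives $|B_m|_2 \le 2^{-\lfloor \log_2 m \rfloor} = 2^{-(\nu-1)}$; hence $B_m \equiv 2^{\nu-1} \pmod{2^\nu}$ holds if and only if $|B_m|_2 = 2^{-(\nu-1)}$ exactly, which is condition (ii) for those $m$. Ranging over all $\nu \ge 2$ recovers (ii) in full.

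It remains to treat the base level $\nu = 1$ separately, since there the coefficient formula reads $B_1 = f(1)$ rather than a difference. Using the description of $\chi(m,x)$ in \eqref{eq:chi}, I would check directly that $f(0) = B_0$ and $f(1) = B_1$, every other characteristic function vanishing at $x=0$ and at $x=1$. Bijectivity modulo $2$ is then $f(0) \not\equiv f(1) \pmod 2$, i.e. $B_0 + B_1 \equiv 1 \pmod 2$, which is condition (i). Combining the level-$\nu$ equivalences through the induction in both directions yields the theorem. The step to get right is the bookkeeping in the inductive characterization: once $f$ is bijective modulo $2^{\nu-1}$, distinct classes $a \bmod 2^{\nu-1}$ have images in distinct fibers of the reduction $\Z/2^\nu\Z \to \Z/2^{\nu-1}\Z$, so that bijectivity modulo $2^\nu$ genuinely decouples into the independent within-fiber splitting conditions, each matched to the correct range of indices $m$.
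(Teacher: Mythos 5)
Your proof is correct, but it takes a genuinely different route from the paper's. The paper deduces the theorem from Corollary \ref{Delta}: it writes a measure-preserving $f$ as $d+x+2g(x)$ with $g$ a T-function, computes the van der Put expansion of that composite using \eqref{eq:vdp-x}, and reads conditions (i)--(ii) off the resulting coefficients $d+2\tilde B_0$, $1+d+2\tilde B_1$, $2^{\lfloor\log_2 m\rfloor}+2\tilde B_m$; sufficiency is obtained by reversing the bookkeeping. You instead go straight to the definition via Theorem \ref{thm:erg-tran}, reducing measure-preservation to bijectivity modulo $2^{\nu}$ for all $\nu$, and observe that by \eqref{eq:vdp-coef} the coefficient $B_m$ for $2^{\nu-1}\le m<2^{\nu}$ \emph{is} the increment $f(a+2^{\nu-1})-f(a)$ whose residue modulo $2^{\nu}$ governs whether the two lifts of $a$ split the fiber over $f(a)\bmod 2^{\nu-1}$; combined with the bound $|B_m|_2\le 2^{-(\nu-1)}$ from Theorem \ref{thm:comp}, the splitting condition $B_m\equiv 2^{\nu-1}\pmod{2^{\nu}}$ becomes exactly $|B_m|_2=2^{-(\nu-1)}$, and the base level $\nu=1$ gives condition (i) from $f(0)=B_0$, $f(1)=B_1$. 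Your inductive fiber argument is sound (distinct residues modulo $2^{\nu-1}$ land in distinct fibers once bijectivity at the lower level is known, so only the within-fiber splittings remain), and the index bookkeeping matching $a\in\{0,\dots,2^{\nu-1}-1\}$ to $\{m:\lfloor\log_2 m\rfloor=\nu-1\}$ is right. What your route buys is independence from the Mahler-series machinery behind Corollary \ref{Delta} and a transparent explanation of \emph{why} the conditions take this form; what the paper's route buys is a template that iterates once more to yield the ergodicity criterion (Lemma \ref{le:erg} and Proposition \ref{prop:vdp-erg} are obtained by the same substitution method applied to $f(x)=1+x+2(g(x+1)-g(x))$), where the direct single-cycle analogue of your fiber-counting argument would be considerably more delicate.
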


\begin{proof}
By Corollary \ref{Delta}, the T-function $f$ is measure-preserving if and only if it can be represented in the form 
 $f(x)=d+x+2g(x),$ where $g\:\Z_2\>\Z_2$ 
 is a  T-function and $d\in \Z_2$. 
Now, given
$g(x)=\sum_{m=0}^\infty \tilde{B}_m \chi (m,x)$, 
the van der Put series for the T-function $g$, 
we find van the der Put coefficients of the function 
$f(x)=d+x+2g(x)$.

As the  van der Put series of the T-function $t(x)=x$ is
\begin{equation}
\label{eq:vdp-x}
t(x)=\sum_{m=1}^\infty 2^{\lfloor\log_2m\rfloor} \chi (m,x),
\end{equation} 
and as $d=d\cdot \chi(0,x)+d\cdot \chi (1,x)$, we get:
\begin{multline}
\label{eq:vdp-delta}
f(x)=(d+2\tilde{B_0})\chi (0,x)+(1+d+2\tilde{B_1})\chi(1,x)+\\
\sum_{m=2}^\infty \left(2^{\lfloor\log_2m\rfloor}+2\tilde{B}_m \right)\chi (m,x).
\end{multline}

This proves necessity of conditions of the Theorem since $|\tilde{B}_m|_2\le 2^{-\lfloor\log_2m\rfloor}$
by Theorem \ref{thm:comp} and hence $\left|2^{\lfloor\log_2m\rfloor}+2\tilde{B}_m\right|_2=
2^{-\lfloor\log_2m\rfloor}$ by the strong triangle inequality \eqref{eq:s-tri}.

To prove sufficiency of the conditions,  we note that the condition
$\left| B_m \right| _2=2^{-\lfloor\log_2m\rfloor}$
implies that $B_m=2^{\lfloor\log_2m\rfloor}+2\tilde{B}_m$ for suitable $\tilde B_m\in\Z_2$,
where
\begin{equation}
\label{eq:vdp-g}
\left|\tilde{B}_m \right|_2 \leq 2^{-\lfloor\log_2m\rfloor},
\end{equation}
and the condition 
$B_0+B_1\equiv 1\pmod 2$ implies that
$B_0=d+2\tilde{B_0}$ and 
$B_1=1+d+2\tilde{B_1}$ for suitable $d,\tilde{B_0},\tilde{B_1}\in\Z_2$. 
Now from equations \eqref{eq:vdp-delta} and \eqref{eq:vdp-x} it follows that
$f(x)=d+x+2g(x)$, where $g(x)=\sum_{m=0}^\infty \tilde B_m\chi(m,x)$ is a
T-function by Theorem \ref{thm:comp} in view of inequality \eqref{eq:vdp-g}.
Thus, $f$ is measure-preserving by Corollary \ref{Delta}. 
\end{proof}
From Theorems \ref{thm:comp} and \ref{thm:vdp-mespres} we deduce now the
following
\begin{corollary} 
\label{cor:vdp-mespres}
A map $f\:\Z_2\rightarrow\Z_2$ is a measure-preserving T-function if and only if
it can be represented as
\[
f(x)= b_0\chi(0,x)+ b_1\chi(1,x)+  
\sum_{m=2}^{\infty}
2^{\left\lfloor \log_2 m \right\rfloor} b_m \chi(m,x),
\]
where $b_m\in\Z_2$, and the following conditions hold simultaneously
\begin{enumerate}
\item $b_0+b_1\equiv 1 \pmod2$;
\item $b_m\equiv 1\pmod 2$, $m=2,3,4\ldots$.
\end{enumerate}
\end{corollary}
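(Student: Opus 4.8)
The plan is to obtain this corollary by simply chaining Theorem \ref{thm:comp} with Theorem \ref{thm:vdp-mespres}; no new idea is needed, only a change of coefficients. First I would invoke Theorem \ref{thm:comp} (in the case $p=2$) to record that a map $f\colon\Z_2\to\Z_2$ is a T-function if and only if it admits a representation of the form \eqref{eq:vdp-t}, namely $f(x)=\sum_{m=0}^{\infty}2^{\lfloor\log_2 m\rfloor}b_m\chi(m,x)$ with all $b_m\in\Z_2$. Since $\lfloor\log_2 0\rfloor=\lfloor\log_2 1\rfloor=0$ by the convention of Remark \ref{rem:log}, the two leading factors $2^{\lfloor\log_2 m\rfloor}$ for $m=0,1$ equal $1$, so this representation is \emph{verbatim} the one displayed in the corollary. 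Hence the availability of the stated representation is already equivalent to $f$ being a T-function, and what remains is to pin down, among all T-functions, exactly those that are measure-preserving.

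Next I would read off the van der Put coefficients from the representation. Comparing with the defining series \eqref{vdp} gives $B_m=2^{\lfloor\log_2 m\rfloor}b_m$ for every $m$; in particular $B_0=b_0$ and $B_1=b_1$, while $|B_m|_2=2^{-\lfloor\log_2 m\rfloor}|b_m|_2$ for $m\ge 2$. With these identities in hand I would apply Theorem \ref{thm:vdp-mespres} and translate each of its two conditions. Condition (i) of that theorem, $B_0+B_1\equiv 1\pmod 2$, is literally $b_0+b_1\equiv 1\pmod 2$, which is condition (i) of the corollary. Condition (ii), $|B_m|_2=2^{-\lfloor\log_2 m\rfloor}$ for $m\ge 2$, becomes $2^{-\lfloor\log_2 m\rfloor}|b_m|_2=2^{-\lfloor\log_2 m\rfloor}$, i.e.\ $|b_m|_2=1$; and $|b_m|_2=1$ holds precisely when $b_m$ is a $2$-adic unit, that is, when $b_m$ is odd, which is the congruence $b_m\equiv 1\pmod 2$ of condition (ii).

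Since both directions of Theorem \ref{thm:vdp-mespres} are already established, these equivalences run both ways simultaneously, so the corollary follows at once. The only point requiring any care is the bookkeeping at $m\in\{0,1\}$, where the exponent $\lfloor\log_2 m\rfloor$ vanishes and one must use $\lfloor\log_2 0\rfloor=0$ to see that the explicitly written terms $b_0\chi(0,x)+b_1\chi(1,x)$ are genuine instances of the general term of \eqref{eq:vdp-t}. Apart from this there is no real obstacle: the whole argument is the substitution $B_m=2^{\lfloor\log_2 m\rfloor}b_m$ carried through the hypotheses of the two quoted theorems.
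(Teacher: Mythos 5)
Your proof is correct and follows exactly the route the paper intends: the paper states the corollary with the single remark ``From Theorems \ref{thm:comp} and \ref{thm:vdp-mespres} we deduce\dots'' and leaves the substitution $B_m=2^{\lfloor\log_2 m\rfloor}b_m$ (with the translation $|b_m|_2=1\iff b_m\equiv 1\pmod 2$) to the reader, which is precisely the bookkeeping you carry out.
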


\subsection{The ergodicity criteria for  T-functions,
in terms of the van der Put coefficients.}
\label{ssec:erg-vdp}
In this subsection, we prove the following criterion of ergodicity for T-functions:
\begin{theorem} 
\label{thm:ergnew}
A T-function $f\: \Z_2\rightarrow \Z_2$ is ergodic if and only if
it can be represented as
\[
f(x)= b_0\chi(0,x)+ b_1\chi(1,x)+  
\sum_{m=2}^{\infty}
2^{\left\lfloor \log_2 m \right\rfloor} b_m \chi(m,x)
\]
for suitable $b_m\in \Z_2$ that satisfy the following conditions:

\begin{enumerate}
\item 
$b_0 \equiv 1 \pmod2$; 
\item $b_0+b_1 \equiv 3 \pmod4$;
\item $|b_m|_2=1$, $m\geq 2$;
\item $b_2+b_3\equiv2\pmod4$;
\item 
$\sum_{m=2^{n-1}}^{2^{n}-1} b_m \equiv 0\pmod4$, $n\geq 3$.
\end{enumerate}
\end{theorem}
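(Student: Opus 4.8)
The plan is to bootstrap off the measure-preservation machinery via Corollary~\ref{Delta}, and then extract the extra single-cycle conditions from the requirement that a certain antidifference be a T-function. By Corollary~\ref{Delta}, $f$ is ergodic if and only if $f(x)=1+x+2\Delta g(x)$ for some T-function $g\colon\Z_2\to\Z_2$. Writing $f$ in the stated van der Put form with coefficients $b_m$, I would first introduce $h(x)=\frac{1}{2}\bigl(f(x)-1-x\bigr)$ and compute its van der Put coefficients using \eqref{eq:vdp-x}: one finds $H_0=(b_0-1)/2$, $H_1=(b_1-2)/2$, and $H_m=2^{\lfloor\log_2 m\rfloor-1}(b_m-1)$ for $m\ge2$. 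By the compatibility criterion (Theorem~\ref{thm:comp}), $h$ is a well-defined T-function precisely when $b_0$ is odd, $b_1$ is even, and every $b_m$ ($m\ge2$) is odd; these give conditions (i) and (iii), the parity of $b_1$ being subsumed into (ii). Since then $f=1+x+2h$, ergodicity reduces to asking whether $h$ has the form $\Delta g$ for a T-function $g$.

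The heart of the argument is to characterise when $h$ admits a T-function antidifference. For $x\in\N_0$ set $g(x)=\sum_{k=0}^{x-1}h(k)$, so that $\Delta g=h$; this $g$ extends (uniquely, by density of $\N_0$ in $\Z_2$) to a T-function if and only if it is $1$-Lipschitz on $\N_0$, i.e. $\sum_{k=a}^{a+2^s-1}h(k)\equiv0\pmod{2^s}$ for all $a\in\N_0$ and all $s\ge1$. The key simplification is that, since $h$ is itself $1$-Lipschitz, consecutive window sums differ by $h(a+2^s)-h(a)\equiv0\pmod{2^s}$; hence the window sum modulo $2^s$ is \emph{independent of $a$}, and the whole family of conditions collapses to the single level requirement $\sum_{k=0}^{2^s-1}h(k)\equiv0\pmod{2^s}$ for each $s\ge1$.

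Finally I would evaluate $\sum_{k=0}^{2^s-1}h(k)$ through the coefficients $H_m$. Only $m<2^s$ survive, each such $m$ occurring exactly $2^{s-n(m)}$ times in the window, where $n(m)=\lfloor\log_2 m\rfloor+1$; the decisive point is that the factor $2^{\lfloor\log_2 m\rfloor}$ carried by $H_m$ cancels against $2^{s-n(m)}$, giving $H_m2^{s-n(m)}=(b_m-1)2^{s-2}$ uniformly for $m\ge2$ (and similarly $(b_0-1)2^{s-2}$, $(b_1-2)2^{s-2}$ for $m=0,1$). Thus $\sum_{k=0}^{2^s-1}h(k)=2^{s-2}T_s$ with $T_s=(b_0-1)+(b_1-2)+\sum_{m=2}^{2^s-1}(b_m-1)$, so by comparing $2$-adic valuations the level condition is simply $T_s\equiv0\pmod4$ for every $s\ge1$. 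Reading $s=1$ gives (ii); $s=2$, modulo (ii), gives (iv); and telescoping $T_s-T_{s-1}$ for $s\ge3$, where $2^{s-1}\equiv0\pmod4$, yields precisely the level sums (v). Necessity runs the same computation in reverse: an ergodic $f$ furnishes $g$ by Corollary~\ref{Delta}, the window sums equal $g(a+2^s)-g(a)\equiv0\pmod{2^s}$ automatically, and the coefficient identities force (i)--(v).

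I expect the main obstacle to be the bookkeeping of this last step: verifying that only $m<2^s$ contribute to the window sum, that the van der Put weights cancel to one uniform power of $2$, and above all that the resulting mod-$2^s$ constraint is genuinely equivalent to a mod-$4$ condition on a single dyadic level, so that no higher-power or cross-level information is silently discarded. The boundary case $s=1$, where the uniform exponent $s-2$ becomes negative, must be argued through valuations rather than divisibility and is exactly the sort of place an error could hide; one should check directly that $h(0)+h(1)=(b_0+b_1-3)/2\in2\Z_2$ reproduces (ii). Once these points are secured, the equivalence of (i)--(v) with ``$f=1+x+2\Delta g$ for a T-function $g$'' is established, and Corollary~\ref{Delta} closes the proof.
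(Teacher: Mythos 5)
Your argument is correct, and it reaches the theorem by a genuinely different route than the paper. The paper also starts from Corollary \ref{Delta}, but then works entirely at the level of van der Put coefficients: Lemma \ref{le:erg} computes the coefficients of $1+x+2(g(x+1)-g(x))$ in terms of an auxiliary sequence $a_m$ attached to $g$ --- which forces a case split at $m=2^n-1$, where the relevant coefficient of $g(x+1)$ is $\tilde B_{2^n}-\tilde B_{2^{n-1}}$ rather than a single shifted coefficient --- and Proposition \ref{prop:vdp-erg} then eliminates the $a_m$ by solving for them explicitly; the formulas \eqref{eq:vdp-ch1} and \eqref{eq:vdp-ch2} are partial sums of normalized coefficients, i.e.\ the coefficient-level shadow of your telescoping. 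You instead invert $\Delta$ at the level of values: $h=(f-1-x)/2$ admits a T-function antidifference iff the window sums $\sum_{k=a}^{a+2^s-1}h(k)$ vanish modulo $2^s$; these collapse to the base point $a=0$ precisely because $h$ is itself $1$-Lipschitz; and the base-point sums evaluate to $2^{s-2}T_s$ via the uniform cancellation $H_m2^{\,s-n(m)}=(b_m-1)2^{\,s-2}$. This buys a proof with no case analysis and no intermediate parametrization, and it makes transparent why there is exactly one mod-$4$ condition per dyadic level; what it gives up is the intermediate criterion of Proposition \ref{prop:vdp-erg}, stated in terms of the raw coefficients $B_m$, which the paper uses independently in Examples \ref{ex:delta} and \ref{ex:delta-2}. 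The delicate points you flag all check out: only $m<2^s$ meet the window, since $\chi(m,k)=1$ would force $k\equiv m\pmod{2^{n(m)}}$ with both $k$ and $m$ in $[0,2^{n(m)})$; the $s=1$ case reduces, via valuations, to $h(0)+h(1)=(b_0+b_1-3)/2\equiv0\pmod2$, which is exactly condition (ii); and $2^{s-2}T_s\equiv0\pmod{2^s}$ is equivalent to $T_s\equiv0\pmod4$ for every $s\ge1$, with the telescoped increments $T_s-T_{s-1}$ yielding (iv) and (v) because $2^{s-1}\equiv0\pmod4$ once $s\ge3$.
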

To prove the Theorem, we need the following Lemma:
\begin {lemma}[\cite{Yurova}]
\label{le:erg}
Let $f\:\Z_2\rightarrow \Z_2$ be a T-function represented by van der Put
series \eqref{vdp}. Then $f$ is ergodic if and only there exists a sequence
$a_0,a_1,\ldots$  of 2-adic integers such that

\begin{equation}
\label{eq:vdp-f-fin}
B_m= 
\begin{cases}
1+2(a_1-a_0), &\text{if}\ m=0;\\
2(1+a_0+2a_2-a_1), &\text{if}\ m=1;\\
2^{n-1}+2^n a_{m+1}-2^n a_m, &\text{if}\ 2^{n-1}\leq m< 2^n-1,  n\geq 2\\
2^{n-1}+2^{n+1} a_{2^n}- 2^n a_{2^n-1}-2^n a_{2^{n-1}}, &\text{if}\ m=2^n-1, n\geq 2.
\end{cases}
\end{equation}
\end{lemma}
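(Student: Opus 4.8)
The plan is to reduce everything to Corollary \ref{Delta}, which characterizes ergodic T-functions as exactly those of the form $f(x)=1+x+2\Delta g(x)$ with $g\colon\Z_2\to\Z_2$ a T-function and $\Delta g(x)=g(x+1)-g(x)$. Thus it suffices to compute the van der Put coefficients $B_m$ of such an $f$ in terms of $g$ and to show that the families of coefficients so obtained are precisely those described by \eqref{eq:vdp-f-fin} for some $2$-adic sequence $(a_m)$. Throughout I would lean on three facts: the coefficient map $f\mapsto(B_m)$ is linear; the constant $1$ and the identity $t(x)=x$ have the expansions $1=\chi(0,x)+\chi(1,x)$ and \eqref{eq:vdp-x}; and the coefficients of a continuous function are recovered from its values by \eqref{eq:vdp-coef}. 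I would also note first that $(a_m)$ may be normalized so that $a_0=0$, since replacing $(a_0,a_1)$ by $(a_0+c,a_1+c)$ changes none of the $B_m$ in \eqref{eq:vdp-f-fin}: the variable $a_0$ enters only through the combinations $a_1-a_0$ and $a_0+2a_2-a_1$, and $a_1$ occurs nowhere for $m\ge 2$.

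For the forward implication I would write an ergodic $f$ as $1+x+2\Delta g$ and compute. If $D_m$ denotes the $m$-th van der Put coefficient of $\Delta g$, linearity gives $B_0=1+2D_0$, $B_1=2+2D_1$, and $B_m=2^{\lfloor\log_2 m\rfloor}+2D_m$ for $m\ge 2$, while \eqref{eq:vdp-coef} gives $D_0=g(1)-g(0)$, $D_1=g(2)-g(1)$, and $D_m=\Delta g(m)-\Delta g(m-2^{n-1})$ for $2^{n-1}\le m\le 2^n-1$, $n\ge 2$, a second difference of $g$. Setting $a_0=0$, $a_1=g(1)-g(0)$, and defining the remaining $a_m$ level by level through the telescoping prescription $a_{m+1}-a_m=D_m/2^{n-1}$ for $2^{n-1}\le m<2^n-1$ together with the transition rule forced by the last line of \eqref{eq:vdp-f-fin}, one obtains the closed form $a_{2^n}=a_{2^{n-1}}+\bigl(g(2^n)-2g(2^{n-1})+g(0)\bigr)/2^n$. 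That each $a_m$ is a genuine $2$-adic integer is where compatibility of $g$ enters: the congruences $g(m)-g(m-2^{n-1})\equiv 0\pmod{2^{n-1}}$ and $g(2^n)-2g(2^{n-1})+g(0)\equiv 0\pmod{2^n}$ (the latter because $g(2^n)\equiv g(0)$ and $2g(2^{n-1})\equiv 2g(0)$ modulo $2^n$) make every division exact. A direct substitution then confirms that these $a_m$ reproduce \eqref{eq:vdp-f-fin}.

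For the converse I would invert the construction. Given $(a_m)$ with $a_0=0$, the function $f$ defined by \eqref{eq:vdp-f-fin} is first seen to be a T-function: in every case $B_m\equiv 2^{\lfloor\log_2 m\rfloor}\pmod{2^{\lfloor\log_2 m\rfloor+1}}$, so $|B_m|_2=2^{-\lfloor\log_2 m\rfloor}$ and Theorem \ref{thm:comp} applies. Since $f=1+x+2\tilde\phi$, where $\tilde\phi=\sum_m D_m\chi(m,\cdot)$ is continuous and the $D_m$ are read off from the relations $B_0=1+2D_0$, $B_1=2+2D_1$, $B_m=2^{\lfloor\log_2 m\rfloor}+2D_m$ ($m\ge 2$) of the previous paragraph, it remains only to exhibit a T-function $g$ with $\Delta g=\tilde\phi$, after which Corollary \ref{Delta} certifies ergodicity. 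I would define $g$ on $\N_0$ by $g(0)=0$ and the recursions obtained by solving the forward relations for the $g$-values, namely $g(2^n)=2g(2^{n-1})-g(0)+2^n(a_{2^n}-a_{2^{n-1}})$ together with the within-level analogue, and then prove $g$ compatible by induction on bitlength. The inductive step rewrites $g(m)-g(m-2^{n-1})$ as $\bigl(g(2^{n-1})-g(0)\bigr)+2^{n-1}(a_m-a_{2^{n-1}})$, whose first summand is divisible by $2^{n-1}$ precisely because $g(2^{n-1})-g(0)=2\bigl(g(2^{n-2})-g(0)\bigr)+2^{n-1}(\cdots)$ and the inductive hypothesis supplies $2^{n-2}\mid g(2^{n-2})-g(0)$. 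Extending $g$ to $\Z_2$ by continuity finishes the argument.

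The main obstacle is the boundary index $m=2^n-1$, where \eqref{eq:vdp-f-fin} switches to its special form and the passage between consecutive levels forces a division by $2^n$ rather than by $2^{n-1}$. Its exactness rests on the single second-difference congruence $g(2^n)-2g(2^{n-1})+g(0)\equiv 0\pmod{2^n}$, which is exactly the feature separating the ergodic form $1+x+2\Delta g$ from a merely measure-preserving $d+x+2g$; in the forward direction it guarantees $a_{2^n}\in\Z_2$, and in the backward direction it drives the inductive compatibility of $g$ at the powers of $2$. The base indices $m=0,1$ and the normalization $a_0=0$ then require only routine separate handling.
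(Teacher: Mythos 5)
Your proof is correct, and its skeleton is the same as the paper's: both directions are routed through Corollary \ref{Delta} and the representation $f(x)=1+x+2(g(x+1)-g(x))$, and your sequence $(a_m)$ is, up to the normalization $a_0=0$, exactly the paper's. The paper takes the $a_m$ to be the normalized van der Put coefficients of $g$, writing $g(x)=a_0\chi(0,x)+\sum_{m\ge1}2^{n_m-1}a_m\chi(m,x)$ with $n_m=\lfloor\log_2 m\rfloor+1$ (legitimate by Theorem \ref{thm:comp}); your closed forms $a_1=g(1)-g(0)$ and $a_{2^n}=a_{2^{n-1}}+\bigl(g(2^n)-2g(2^{n-1})+g(0)\bigr)/2^n$, combined with the within-level telescoping, recover precisely $2^{n_m-1}a_m=g(m)-g\bigl(m-2^{n_m-1}\bigr)$, i.e.\ the same parametrization. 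Where you genuinely diverge is the bookkeeping. The paper works coefficient-side throughout: it computes the van der Put coefficients of $g(x+1)$, then of $2(g(x+1)-g(x))$, then adds the expansion of $x+1$, and this makes the converse essentially free --- given $(a_m)$ one defines $g$ directly by the series above, which is a T-function by Theorem \ref{thm:comp} with no further work, and the forward computation read backwards yields $f=1+x+2(g(x+1)-g(x))$. Your value-side converse --- defining $g(m)$ by recursion on $\N_0$ and proving compatibility by induction on bitlength --- is sound (your inductive identity $g(m)-g(m-2^{n-1})=\bigl(g(2^{n-1})-g(0)\bigr)+2^{n-1}\bigl(a_m-a_{2^{n-1}}\bigr)$ does follow by telescoping the increments $D_{2^{n-1}+x}$), but it amounts to re-deriving the sufficiency half of Theorem \ref{thm:comp} in this special case; invoking that theorem on the directly defined series would save the entire induction. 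One slip, though a harmless one: your parenthetical claim that ``in every case'' $B_m\equiv 2^{\lfloor\log_2 m\rfloor}\pmod{2^{\lfloor\log_2 m\rfloor+1}}$, hence $|B_m|_2=2^{-\lfloor\log_2 m\rfloor}$, fails at $m=1$, where $B_1=2(1+a_0+2a_2-a_1)$ is even; since $\lfloor\log_2 1\rfloor=0$, Theorem \ref{thm:comp} only requires $|B_1|_2\le 1$ there, so your compatibility conclusion is unaffected, but the equality of absolute values should be asserted only for $m\ge 2$.
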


\begin{proof}
By Corollary \ref{Delta}, a T-function $f$ is ergodic if and only if it can be represented as 
$f(x)=1+x+2(g(x+1)-g(x))$,
where $g(x)$ is a suitable T-function.
That is, by Theorem \ref{thm:comp}, 
\begin{equation}
\label{eq:vdp-g1}
g(x)=a_0\chi (0,x)+ \sum_{m=1}^\infty 2^{n_m-1}a_m\chi (m,x)=\sum_{m=0}^\infty \tilde B_m \chi (m,x),
\end{equation}
for suitable $a_0,a_1,a_2\ldots\in\Z_2$, $\tilde B_0,\tilde B_1,\tilde B_2,\ldots\in\Z_2$ (here
$n_m=\lfloor\log_2m\rfloor+1$, $m=1,2,3,\ldots$).

Now, to prove necessity of conditions of the Lemma, we just need to express
the van der Put coefficients of the function $f$ via the coefficients of the
function $g$.
First, we do this for the van der Put coefficients $\bar B_m$ of the T-function 
\begin{equation}
\label{eq:vdp-g(x+1)}
g(x+1)=\sum_{m=0}^\infty \bar B_m\chi (m,x).
\end{equation}
If
$m>1$ then by \eqref{eq:vdp-coef}
$\bar B_m=g(m+1)-g(m+1-q(m))$ where $q(m)=\delta_{n_m-1}(m)2^{n_m-1}$.
If $m\neq 2^{n_m}-1$ then $q(m)=q(m+1)$,
therefore
$\bar B_m=g(m+1)-g(m+1-q(m+1))=\tilde B_{m+1}=2^{n_m-1} a_{m+1}$ by \eqref{eq:vdp-g}
as $n_m=n_{m+1}$ in this case.
If $m=2^{n_m}-1$ then 
$\bar B_m=g(2^{n_m})-g(2^{n_m-1})$ as $q(2^{n_m}-1)=2^{n_m-1}$.
As $\tilde B_{2^n}=g(2^n)-g(0)$ and $\tilde B_{2^{n-1}}=g(2^{n-1})-g(0)$  by \eqref{eq:vdp-g},
we conclude that
$\bar B_{2^n-1}=\tilde B_{2^n}-\tilde B_{2^{n-1}}$.
Finally,  the coefficients $\bar B_0,\bar B_1$ can be found directly from
\eqref{eq:vdp-g(x+1)}: 
$\bar B_0=g(1)=\tilde B_1$, $\bar B_1=g(2)=\tilde B_0+\tilde B_2$.
Now we can find  the van der Put coefficients $\hat B_m$ of the function
$2(g(x+1)-g(x))$; they are:

\begin{equation}
\label{eq:vdp-2delta_g}
\hat B_m= \begin{cases} 
2(\tilde B_1-\tilde B_0), &\text{if}\ m=0;\\
2(\tilde B_0+\tilde B_2-\tilde B_1), &\text{if}\ m=1;\\
2(\tilde B_{m+1}-\tilde B_m),&\text{if}\ 2^{n-1}\leq m< 2^n-1, n\geq 2;\\
2(\tilde B_{m+1}-\tilde B_m-\tilde B_{\frac{m+1}{2}}),&\text{if}\ m=2^n-1, n\geq 2.
\end{cases}
\end{equation}

As $\chi(0,x)+\chi(1,x)=1$ for all $x\in\Z_2$, from \eqref{eq:vdp-t} we derive
the van der Put expansion for the function $x+1$; namely,
\begin{equation}
\label{eq:vdp-(x+1)}
x+1=\chi (0,x)+2\chi (1,x)+\sum_{m=2}^\infty 2^{n_m-1}\chi (m,x). 
\end{equation} 
From \eqref{eq:vdp-g} we have that
$\tilde B_0=a_0$, $\tilde B_m=2^{n-1} a_m$ 
when $2^{n-1}\leq m\leq 2^n-1$, $n=1,2,\ldots$.
Now combining the latter expressions with \eqref{eq:vdp-(x+1)} and \eqref{eq:vdp-g(x+1)}
we  conclude that the van der Put coefficients $B_m$ of the function 
$f(x)=1+x+2(g(x+1)-g(x))$ are of the form \eqref{eq:vdp-f-fin}.

To prove sufficiency of conditions of the Lemma  we just remark that  the
above argument shows that given
expressions \eqref{eq:vdp-f-fin} for the van der Put coefficients of the
function $f$ we can represent
the T-function 
$f$ in the form
$f(x)=1+x+2(g(x+1)-g(x))$ where the  van der Put expansion for  $g$ is
given by \eqref{eq:vdp-g}. That is, the function $g$ is a T-function by Theorem
\ref{thm:comp}; therefore the T-function $f$ is ergodic by Corollary \ref{Delta}.
\end{proof}

Now we are able to prove the following Proposition which actually is 
a criterion of ergodicity for T-functions, in terms of van der Put coefficients:
\begin{proposition}
\label{prop:vdp-erg}
Let 
$f\:\Z_2 \rightarrow \Z_2$ 
be a T-function represented by the van der Put series \eqref{vdp}.
Then $f$ is ergodic if and only if the following conditions are satisfied
simultaneously:

\begin{enumerate}
\item $B_0 \equiv 1 \pmod2$; 
\item $B_0+B_1 \equiv 3 \pmod4$;
\item $\left| B_m \right| _2=2^{-(n-1)}$, $n\geq2$, $2^{n-1}\leq m < 2^n-1$;
\item $\left|\sum_{m=2^{n-1}}^{2^n-1} (B_m-2^{n-1}) \right|_2 \leq 2^{-(n+1)}$, $n\geq 2$.
\end{enumerate}
\end{proposition}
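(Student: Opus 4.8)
The plan is to use Lemma~\ref{le:erg} as a bridge. That Lemma characterizes ergodicity by the existence of an auxiliary sequence $a_0,a_1,\ldots\in\Z_2$ whose associated van der Put coefficients are given by \eqref{eq:vdp-f-fin}. So I would prove the Proposition by showing that conditions (i)--(iv) are precisely equivalent to the \emph{solvability} of \eqref{eq:vdp-f-fin} for some such sequence $(a_m)$; that is, (i)--(iv) are the coefficient-only restatement of Lemma~\ref{le:erg} with the $a_m$ eliminated.

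For necessity, I would assume a sequence $(a_m)$ realizing \eqref{eq:vdp-f-fin} exists and simply read off the conditions. The $m=0$ case $B_0=1+2(a_1-a_0)$ gives $B_0\equiv1\pmod 2$, which is (i); adding the $m=0$ and $m=1$ cases yields $B_0+B_1=3+4a_2$, hence (ii). For $2^{n-1}\le m<2^n-1$ one has $B_m=2^{n-1}\bigl(1+2(a_{m+1}-a_m)\bigr)$, an odd multiple of $2^{n-1}$, which is (iii). Condition (iv) is the only substantial one: summing $B_m-2^{n-1}$ over the whole block $2^{n-1}\le m\le 2^n-1$, the interior terms $2^n(a_{m+1}-a_m)$ telescope to $2^n(a_{2^n-1}-a_{2^{n-1}})$, and adjoining the boundary term from $m=2^n-1$ everything collapses to $2^{n+1}(a_{2^n}-a_{2^{n-1}})$; so the block sum is divisible by $2^{n+1}$, which is exactly (iv).

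For sufficiency, I would reverse this and construct $(a_m)$ explicitly. Fix $a_0=0$ (its value is free, as (i)--(iv) do not constrain it). Condition (i) lets me set $a_1=a_0+(B_0-1)/2\in\Z_2$, and then (ii) makes $a_2=(B_0+B_1-3)/4$ a $2$-adic integer. Proceeding block by block for $n\ge2$: given $a_{2^{n-1}}$, condition (iii) forces $B_m-2^{n-1}$ to be divisible by $2^n$ (since $B_m$ is an odd multiple of $2^{n-1}$), so I can define $a_{m+1}=a_m+(B_m-2^{n-1})/2^n$ in $\Z_2$ for $2^{n-1}\le m<2^n-1$, reaching $a_{2^n-1}$. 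Finally I set $a_{2^n}$ from the $m=2^n-1$ case of \eqref{eq:vdp-f-fin}; the same telescoping computation shows the relevant numerator equals $2^{n+1}a_{2^{n-1}}+\sum_{m=2^{n-1}}^{2^n-1}(B_m-2^{n-1})$, which by (iv) is divisible by $2^{n+1}$, so $a_{2^n}\in\Z_2$. By construction $(a_m)$ satisfies \eqref{eq:vdp-f-fin}, hence $f$ is ergodic by Lemma~\ref{le:erg}.

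The main obstacle is the telescoping identity underlying (iv): one must correctly combine the interior block terms (which telescope) with the special boundary term at $m=2^n-1$ to see that the block sum of $B_m-2^{n-1}$ equals $2^{n+1}(a_{2^n}-a_{2^{n-1}})$. Keeping the powers of $2$ and the block indices straight in both directions---extracting the conditions in the necessity part, and verifying $2$-adic integrality of each constructed $a_m$ in the sufficiency part---is where care is needed; everything else is routine bookkeeping layered on top of Lemma~\ref{le:erg}.
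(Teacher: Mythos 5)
Your proof is correct and follows essentially the same route as the paper's: both reduce the statement to Lemma~\ref{le:erg} and then eliminate the auxiliary sequence $(a_m)$, reading off conditions (i)--(iv) for necessity and reconstructing $(a_m)$ block by block for sufficiency (your recursion $a_{m+1}=a_m+(B_m-2^{n-1})/2^n$ is the paper's partial-sum formula \eqref{eq:vdp-ch1}, and your formula for $a_{2^n}$ is \eqref{eq:vdp-ch2}). If anything, your explicit telescoping at the boundary index $m=2^n-1$ is slightly more careful than the paper's, which attributes the $2$-adic integrality of $\check{B}_{2^n-1}$ to condition (iii) alone even though (iii) excludes that index and one really needs (iii) and (iv) together there.
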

\begin{proof}
By Lemma \ref{le:erg}, if the T-function $f$ is ergodic then its van der Put coefficients $B_m$
can be expressed in the form \eqref{eq:vdp-f-fin}, for suitable $a_0,a_1,a_2,\ldots\in\Z_2$. From \eqref{eq:vdp-f-fin} by direct calculation  we easily prove that conditions
(i)--(iv) of the Proposition are true.

By Lemma \ref{le:erg}, to prove sufficiency of conditions of the Proposition
we must find a sequence of 2-adic integers 
$a_0,a_1,a_2,\ldots$ such that relations \eqref{eq:vdp-f-fin} for the van
der Put coefficients $B_n$ hold. Take arbitrarily $a_0,a_1\in\Z_2$ so that
\begin{equation}
\label{eq:vdp-c}
a_1-a_0=\frac{B_0-1}{2}
\end{equation} 
(cf. the first equation from \eqref{eq:vdp-f-fin}
and condition (i) of the Proposition); then put
\begin{equation}
\label{eq:vdp-ch}
a_2=\frac{B_1+2(a_1-a_0)-2}{4}=\frac{B_1+B_0-3}{4}
\end{equation}
(cf. the second equation from \eqref{eq:vdp-f-fin}). Note that $a_2\in\Z_2$
due to the condition (ii) of the Proposition.
We construct  $a_3,a_4,a_5,\ldots\in\Z_2$ inductively. Denote 
\begin{equation}
\label{eq:vdp-ch0}
\check{B}_m=\frac{B_m-2^{n-1}}{2^n},
\end{equation}
where $n=\lfloor\log_2 m\rfloor+1$, $m\ge 3$;
then $\check{B}_m\in \Z_2$ by condition (iii). Given $a_{2^{n-1}}\in\Z_2$,
for $\alpha=1,2,\ldots, 2^{n-1}-1$
put
\begin{align}
\label{eq:vdp-ch1}
a_{2^{n-1}+\alpha}=&a_{2^{n-1}}+\sum_{m=2^{n-1}}^{2^{n-1}+\alpha-1} \check{B}_m,\\ \label{eq:vdp-ch2}
a_{2^n}=&a_{2^{n-1}}+\frac{1}{2} \sum_{m=2^{n-1}}^{2^n-1} \check{B}_m.
\end{align}
Then 
$a_{2^{n-1}+\alpha}\in\Z_2$ 
by condition (iii) of the Proposition; and  $a_{2^n}\in\Z_2$ by condition
(iv).
Therefore all
$a_0,a_1,a_2,\ldots$ 
are in $\Z_2$.

Now solving system of equations \eqref{eq:vdp-c},\eqref{eq:vdp-ch},\eqref{eq:vdp-ch0}, \eqref{eq:vdp-ch1}, \eqref{eq:vdp-ch2}
with respect to unknowns $B_m$, $m=0,1,2,3,\ldots$, we see that the van der
Put coefficients $B_m$ satisfy conditions \eqref{eq:vdp-f-fin} of Lemma \ref{le:erg}.
Therefore $f$ is ergodic.
\end{proof}
Now we are able to prove Theorem \ref{thm:ergnew}:
\begin{proof}[Proof of Theorem \ref{thm:ergnew}]
Consider the van der Put expansion \eqref{vdp} of the T-function $f$; then
by Theorem \ref{thm:comp}, $B_m=2^{\left\lfloor  \log_2 m \right\rfloor} b_m$, for suitable $b_m\in\Z_2$. It is clear now
that conditions (i), (ii) and (iii) of Proposition \ref{prop:vdp-erg} are equivalent
respectively to 
conditions (i), (ii) and (iii) of Theorem \ref{thm:ergnew}.

Take $2^{n-1}\leq m < 2^n$, $n\ge 2$; thus
$B_m=2^{n-1}b_m$. Then condition (iv) of Proposition \ref{prop:vdp-erg} is
equivalent to the congruence
$\sum_{m=2^{n-1}}^{2^n-1} (B_m-2^{n-1}) \equiv 0\pmod{2^{n+1}}$; which 
is equivalent to the congruence $2^{n-1} \sum_{m=2^{n-1}}^{2^n-1} (b_m-1)\equiv 0\pmod{2^{n+1}}$
as $B_m=2^{n-1}b_m$. However, the latter congruence is equivalent to the
congruence $\sum_{m=2^{n-1}}^{2^n-1} (b_m-1)\equiv 0\pmod4$ which in turn
is equivalent either to the congruence $\sum_{m=2^{n-1}}^{2^n-1} b_m\equiv 0\pmod4$
(when $n\ge 3$) or to the congruence $\sum_{m=2^{n-1}}^{2^n-1} b_m\equiv 2\pmod4$ (when $n=2$). However, the latter two congruences are respectively
conditions
(v) and (iv) of Theorem \ref{thm:ergnew}.                               
\end{proof}

\section{Applications}
\label{sec:App}
In this Section we consider some applications of the above criteria: We give
a new (and short!) proof of ergodicity of a known ergodic T-function which is used
in a filter of ABC stream cipher from \cite{ABCv3},  then we prove ergodicity
of a more complicated T-function (the latter result is new). After that we
explain how to use Theorem \ref{thm:vdp-mespres} (the bijectivity criterion)
in order to construct huge classes of large Latin squares.  Finally we
present a knapsack-like algorithm for fast computation of arbitrary T-function
that use only integer additions and calls to memory. 
\subsection{Examples of ergodic T-functions with masking}
\label{ssec:example}
In this Subsection we consider two example of ergodic T-functions constructed
from additions, multiplications and masking (i.e., the instruction
$\mask(x,c)=x\AND c$). Thus, being implemented as computer programs both these T-functions are fast enough.

The T-function from Example \ref{ex:delta}
is used to construct a filter in ABC stream cipher \cite{ABCv3}; however, the proof
of its ergodicity (which is based on Mahler series and Theorem \ref{thm:ergBin})
is highly technical and complicated, see e.g. \cite[Theorem
9.20]{AnKhr} or \cite{me-CJ}. Although a shorter proof might
be given with the use of ANF-based criteria (Theorem \ref{thm:ergBool}),  below we
give a very short proof by applying the
ergodicity criteria in terms of van der Put series, cf. Theorem \ref{thm:ergnew}.

Let
$x=\chi_0+\chi_1\cdot 2+\ldots +\chi_k\cdot 2^k+\cdots$ 
be a 2-adic representation of  
$x\in \Z_2$; remind that we denote $\delta_k (x)=\chi_k$ (cf. beginning of Subsection \ref{ssec:NA}).
In other words, the value of the function 
$\delta_k\:\Z_2\to\F_2$ at the point $x\in\Z_2$ is the $k$-th binary digit of the base-2 expansion of $x$; so 
$
\sum_{k=0}^\infty 2^k \delta_k(x) = \sum_{k=0}^\infty \chi_k\cdot2^k  = x
$. 
Note that $\delta_k(x)=1$ if and and only if $x$ is congruent modulo $2^{k+1}$
to either of the numbers $2^k,2^k+1,\ldots, 2^{k+1}-1$; so
\begin{equation}
\label{eq:delta-vdp}
\delta_k(x)=\sum_{m=2^k}^{2^{k+1}-1}\chi(m,x),
\end{equation}
as $\chi(m,x)$ is a characteristic function of the ball $\mathbf B_{2^{-\left\lfloor  \log_2 m \right\rfloor-1}}(m)$, see \eqref{eq:chi}.

\begin{example}
\label{ex:delta}
Given a sequence $c,c_0,c_1,c_2,\ldots$ of 2-adic integers, 
the series   
\begin{equation}
\label{eq:delta}
c+\sum_{i=0}^\infty c_i\delta_i(x)
\end{equation}
defines an ergodic  T-function $f\:\Z_2\rightarrow \Z_2$ if and only if the
following conditions hold simultaneously:
\begin{enumerate}
\item $c\equiv 1\pmod{2}$;
\item $c_0\equiv 1\pmod{4}$;
\item $|c_{i}| _{2}=2^{-i}$, for $i=1,2,3,\ldots$.
\end{enumerate}
\end{example}
Indeed, substituting \eqref{eq:delta-vdp} to \eqref{eq:delta} we
obtain the series
$c+\sum_{i=0}^\infty c_i\sum_{m=2^i}^{2^{i+1}-1}\chi(m,x)$;
so the van der Put coefficients are: $B_0=c$, $B_1=c+c_0$, $B_m=c_{\lfloor\log_2m\rfloor}$
for $m\ge 2$. Now from condition (i) of Proposition \ref{prop:vdp-erg} we have that $c\equiv 1\pmod 2$; however, from condition (ii) of Proposition \ref{prop:vdp-erg} we have that $2c+c_0\equiv 3\pmod 4$ which gives us that
$c_0\equiv 1\pmod 4$. Condition (iii) of Proposition \ref{prop:vdp-erg} is
equivalent to the condition $|c_i|_2=2^i$ for $i\ge 1$. Due to these three conditions,
condition (iv) of Proposition \ref{prop:vdp-erg} is satisfied since 
$\sum_{m=2^i}^{2^{i+1}-1}(c_i-2^i)=2^i c_i-2^{2i}\equiv 0\pmod{2^{i+2}}$
for $i\ge 1$. This ends the proof.

Note that under conditions of Example \ref{ex:delta}, the T-function $f(x)=c+\sum_{i=0}^\infty c_i\delta_i(x)$ can be expressed via operations of integer addition,
integer multiplication by constants, and operation of masking,  $\mask(x,2^i)=x\AND
2^i=2^i\delta_i(x)$:
As $c_i=2^id_i$ for suitable $d_i\in\Z_2$, $i=0,1,2,\ldots$,  by conditions (ii)--(iii), we have that
$f(x)=c+\sum_{i=0}^\infty d_i\cdot\mask(x,2^i)$; so the corresponding T-function
$\bar f=f\bmod 2^k$ on $k$ bit words is 
$\bar f(x)=c+\sum_{i=0}^{k-1} d_i\cdot\mask(x,2^i)$.

Note that in the just considered example the coefficients $c,c_0,c_1,\ldots$
do not depend on $x$. Now we consider a more complicated T-function of
this sort where the  coefficients
depend on $x$.
As the least non-negative residue modulo $2^k$ is a special case of $\mask$
instruction, $x\bmod 2^k=\mask(x,2^k-1)$, the T-function
from Example \ref{ex:delta-2} can be expressed via integer additions, multiplications,
and masking.
\begin{example}
\label{ex:delta-2}
The following T-function $f$ is ergodic on $\Z_2$ : 
\[
f(x)=1+\delta_0(x)+6\delta_1(x)+\sum_{k=2}^\infty (1+2(x\bmod{2^k})
)2^k\delta_k(x).
\]
\end{example}
To prove the assertion  we calculate van der Put coefficients $B_m$.
For $m\in\{0,1\}$ we have:
\begin{enumerate}
\item $B_0=f(0)=1$
\item $B_1=f(1)=2$              
\end{enumerate}
Given $m=m_0+2m_1+\ldots +2^{n-2}m_{n-2}+2^{n-1}$, denote $m=\acute{m}+2^{n-1}$,
$n_m=n=\lfloor\log_2m\rfloor+1$.
Then, we calculate the van der Put coefficients for the case $n_m=2$. 
As $m=m_0+2$ in this case, we see that
\[
B_{m_0+2}=f(m_0+2)-f(m_0)=6;
\]
so $B_2=B_3=6$.

Now we proceed with calculations of $B_m$ for the case $n_m=n\geq 3$:

\begin{align*}
B_m&= f(\acute{m}+2^{n-1}) -f(\acute{m})= 1+ \delta_0 (\acute{m}+2^{n-1}) + 6\cdot \delta_1 (\acute{m}+2^{n-1}) +\\
&+ \sum_{k=2}^{n-1} 2^k(1+2((\acute{m}+2^{n-1})\bmod{2^k})) 
\cdot \delta_k(\acute{m}+2^{n-1}) -f(\acute{m})=\\
&= f(\acute{m})+2^{n-1}(1+2\acute{m}) -f(\acute{m}).
\end{align*}

So we conclude that if  
$n_m\geq 3$ then
\[
B_m=2^{n-1}(1+2\acute{m}),
\]
where $\acute{m}= m_0+2m_1+\ldots +2^{n-2}m_{n-2}$.
Finally we get:
\begin{enumerate}
                        \item $B_0=1\equiv 1\pmod 2$; 
                        \item $B_0+B_1=1+2\equiv 3\pmod 4$; 
                        \item $\left| B_m \right| _2=\left|2^{n_m-1}(1+2\acute{m}) \right| _2=2^{-(n_m-1)}$ for $n_m\geq 3,$ and \\
                                                $\left|B_2\right|_2=\left|B_3\right|_2=\left| 6 \right| _2=2^{-1}$,  $n_2=n_3=2$.
                        \item  For $n_m=2,$ we have that        $\left|(B_2-2)+(B_3-2) \right|_2=2^{-(2+1)}$;
and for $n_m =n\geq 3$, we have that
                                                        \begin{multline*}
                                                \left| \sum _{m=2^{n-1}}^{2^n-1} (B_m-2^{n-1})\right|_2 =
                                                \left| \sum _{\acute{m}=0}^{2^{n-1}-1} \left(B_{\acute{m}+2^{n-1}} -2^{n-1} \right) \right|_2 = \\
                                                 \left| \sum _{\acute{m}=0}^{2^{n-1}-1} (2^{n-1}(1+2\acute{m}) -2^{n-1}) \right|_2 = 
                                                \left| 2^n \sum _{\acute{m}=0}^{2^{n-1}-1} \acute{m} \right|_2 = \\
                                                \left| 2^n \cdot \frac{(1+2^{n-1}-1)}{2}\cdot (2^{n-1}-1) \right|_2 \leq 2^{-(n+1)}.
                                                        \end{multline*}
\end{enumerate}
Therefore, under conditions of Example \ref{ex:delta-2} the T-function $f$ is ergodic by Proposition\ref{prop:vdp-erg}.
\subsection{Latin squares}
\label{ssec:Lat}
In this Subsection we explain how one may use the bijectivity criterion (Theorem
\ref{thm:vdp-mespres}) 
to construct Latin squares of order $2^\ell$.
We recall that 
a \emph{Latin square of order $P$} is a $P\times P$ matrix
containing $P$ distinct symbols (usually denoted by $0,1,\ldots,P-1$) such
that each row and column of the matrix contains each symbol 
exactly once. Latins squares are used in numerous applications: For games (recall sudoku), for private communication networks (password distribution), in coding theory, in cryptography (e.g., as stream cipher combiners), etc., see, e.g.,
monographs \cite{Lat_Keed,Lat_discr}.

There is no problem to construct one Latin square (a circulant matrix serves
an obvious example), a problem  is how to write a software that produces a number of large Latin squares; however, this is only a part of the problem. Another part of the problem is that in some constraint environments (e.g., in smart cards)  the whole matrix can not be stored in memory: Given two numbers $a,b\in\{0,1,\ldots, P-1\}$, the software must calculate
the $(a,b)$-th entry of the matrix on-the-fly. 

A number
of
methods have been developed in order to construct Latin squares, see e.g.
the monographs we refer above; however, not all of the methods provide solution
to the said problem  since the methods are based on mappings which
are somewhat slow if implemented in software (as, e.g., are polynomials over large
finite fields). Therefore new methods that are based on `fast' computer instructions
are needed. Methods of the latter sort have been developed by using the 2-adic
ergodic theory, see  \cite[Section 8.4]{AnKhr} where the said theory is applied
to construct Latin squares as well as pairs of orthogonal Latin squares. 
The methods of the mentioned monograph are based on differentiability of
$p$-adic mappings; by using Theorem \ref{thm:vdp-mespres}, methods based on
van der Put series can also be developed. We illustrate the general idea
by a simple example.

A Latin square of order $P$ is just a bivariate
mapping $F\colon(\Z/P\Z)^2\rightarrow\Z/P\Z$ which is bijective
with respect to either variable. Therefore, given a pair of bijective (measure-preserving)
T-functions
$$f(x)=\sum_{m=0}^{\infty}
2^{\left\lfloor \log_2 m \right\rfloor} b_m \chi(m,x)\ \text{and}\
\bar f(y)=\sum_{m=0}^{\infty}
2^{\left\lfloor \log_2 m \right\rfloor} \bar b_m \chi(m,y)
$$ whose van der
Put coefficients satisfy Corollary \ref{cor:vdp-mespres}, the function
$$F(x,y)=
(\sum_{m=0}^{\infty}
2^{\left\lfloor \log_2 m \right\rfloor} (b_m \chi(m,x)+
\bar b_m \chi(m,y)))\bmod p^\ell$$ 
is a Latin square of order $2^\ell$. The idea can be developed further;
however, this implies expanding of the apparatus of van der Put series to
the multivariate case. Development of the corresponding theory can be a subject of a future work but now it is
out of scope of the current paper. We note only that in order to obtain really
fast performance of the corresponding software, methods of fast
evaluation of T-functions are needed. We introduce a method of that kind in the next Subsection.

\subsection{Fast computation of T-functions}
\label{ssec:t-m}
In this Subsection we demonstrate how by using the van der Put representation
of a T-function one could speed-up evaluation of the T-function via time-memory
trade-offs.

Let a T-function $f$ be represented via van der Put series \eqref{vdp}; then
the respective T-function on $k$-bit words is
\begin{equation}
\label{eq:vdp-k-bit} 
\bar f=f\bmod 2^k=\sum_{m=0}^{2^k-1}B_m\chi(m,x) 
\end{equation}
by Theorem \ref{thm:comp},
see \eqref{eq:vdp-t}. Arrange coefficients $B_m$, $m=0,1,\ldots,2^k-1$ into
array $B(f)=[B_m\colon m=0,1,\ldots,2^k-1]$; so the address of the coefficient $B_m$ is $m$, $m=0,1,\ldots,2^k-1$. From  \eqref{eq:vdp-k-bit} and \eqref{eq:chi}
it follows that the value $\bar f(x)$ of the T-function $f$ at $x\in\{0,1,\ldots,2^k-1\}$
is equal to the sum modulo $2^k$ of coefficients $B_m$ for $m=x\bmod{2},
x\bmod{4},\ldots, x\bmod{2^k}$; so
to calculate the output $\bar f(x)$
given input $x\in\{0,1,\ldots,2^k-1\}$ one 
needs $k-1$ additions modulo $2^k$ and $k$ calls to memory.
To calculate $\bar f(x)$ the following procedure may be used
(note that $\mask(x,2^j-1)=x\bmod 2^j$):

{\tt
\begin{tabbing}
        
\=~\=~        \=if \=$\mask(x,1)$\=$\ge1$ \=the\=n $S:=B_1$\\
        
           \>\>\>\>\>\>\>else $S:=B_0$;\\
       \>\>\>$i:=1$;\\
        C:      \>\>\>if $i=k$ then $f(x):=S$ and STOP\\ 
            \>\>\>\>\>\=else 
            $i:=i+1$;\\
                \>\>\>if $\mask(x,2^{i}-1)\ge 2^{i-1}$ \=then $S:=S+ B_{\mask(x,2^{i}-1)}$;\\
                   
                  \>\>\> repeat C.                
\end{tabbing} 
}
It can be easily seen that to compute  $f(x)$ the procedure
uses $k$ memory calls to retrieve relevant coefficients $B_m$,  $k$ compare instructions
$\ge/\not\ge$ of integers,  $k$ maskings $\mask$, 
and $k-1$ integer additions.
Note that if necessary
the compare routine
$\mask(x,2^i-1)\ge2^{i-1}$ may be replaced with the routine to determine whether $\mask(x,2^{i-1})$ is 0 or not; however this doubles the total number of maskings. Note also that given arbitrary measure-preserving (respectively,
ergodic) T-function $f$, the array $B(f)$ consists of $2^k$ integers
which may be too large in practical cases. If so, arrays where most   entries $B_m$  are $2^{\lfloor\log_2m\rfloor}$ may be used then (cf. condition (ii)
of Theorem \ref{thm:vdp-mespres} and condition (iii) of Proposition \ref{prop:vdp-erg},
respectively): In this case,  most entries
must not necessarily be kept in memory, they can be calculated on-the-fly
instead, by a suitable fast
routine.   In connection with the issue
it would be interesting to study other cryptographical properties of measure-preserving/ergodic
T-functions whose van der Put coefficients comprise arrays of this kind.
\section{Conclusion}
\label{sec:Concl}
In the paper, we present new criteria for a T-function to be bijective (that
is, measure-preserving)
or transitive (that is,  ergodic). In the proofs
techniques from non-Archimedean ergodic theory are used: The new criteria
are based on representation of a T-function via van der Put series,  special
series from $p$-adic analysis. 
We note that the criteria are not
`globally'
superior to other known criteria (e.g., the ones based on Mahler series or on ANFs of coordinate
functions): being necessary and sufficient conditions for the bijectivity/transitivity of T-functions, all the criteria (speaking rigorously) are equivalent one to another. However, some criteria are easier to apply to some particular types of T-functions than the other criteria. In the paper we give an evidence
that the criteria based on van der Put
series are most suitable to determine bijectivity/transitivity
of a T-function whose composition includes machine instructions 
like, e.g., masking.
We also use the van der Put series to construct a knapsack-like algorithm
for fast evaluation of T-functions via time-memory trade-offs.


\end{document}